\documentclass[a4paper,10pt]{scrartcl}

\usepackage{amssymb}
\setcounter{tocdepth}{3}
\usepackage[pdftex]{graphicx}

%%%%%%%%%%%%%%%%%%%%%%%%%%%%%%%%%%%%%%%%%%%%%%%%%%%%%%%%%%%%%%%%%%%
% Own commands and packages
%%%%%%%%%%%%%%%%%%%%%%%%%%%%%%%%%%%%%%%%%%%%%%%%%%%%%%%%%%%%%%%%%%%

\usepackage[utf8x]{inputenc}
\usepackage[T1]{fontenc}
\usepackage{algpseudocode}  % used for algorithms in pseudocode
\usepackage{amsmath}
\usepackage{algorithm}
\usepackage{xspace}
\usepackage{amsmath,amssymb,amstext,amsthm}

% proof environments

%\newcommand{\qed}{\hspace*{\fill}\sq}
\newenvironment{proofof}[1]{\bigskip \noindent {\bf Proof of #1:} }
{\qed\par\vskip 4mm\par}

\usepackage{tikz}	% figures and TODO boxes are generated with tikz

% cricle size
\newcommand{\nodesize}{6pt}

% #1: (x,y)-position
% #2: node name
% #3: label
\newcommand{\labelednode}[4]{
        \node[node] (#2) at (#1){};
        \node[empty,yshift=1em] at (#1) {#3};
}
\newcommand{\xshift}{2}

\usetikzlibrary{fit}
\usetikzlibrary{calc}
\usetikzlibrary{snakes}

% define tikz styles used in our figures
\tikzstyle{every picture}+=[
node/.style={circle, minimum size=\nodesize, draw=black, inner sep=0,fill=white},       % print a circle
empty/.style={draw=none,fill=none},
curvedLine/.style={decorate,decoration={snake,amplitude=.5mm}}
]

%%%%%%%%%%%%%%%%%%%%%%%%%%%%%%%%%%%%%%%%%%%%%%%%%%%%%%%
% Macros for names

\newcommand{\N}{\mathbb{N}}

\newcommand{\mydef}[1]{\emph{#1}\xspace}  % own definition macro
\newcommand{\key}[1]{\ensuremath{\textsf{key}(#1)}\xspace}    % key of a data item
\newcommand{\numblocks}{\ensuremath{\Lambda}\xspace}  % number of blocks of an address
 % #1-th suffix of a data item #2
\newcommand{\bit}[2]{\ensuremath{\textsf{bit}_{#1}(#2)}\xspace} % #1-th bit of a data item #2
\newcommand{\bucket}[1]{\ensuremath{B}_{#1}\xspace} % bucket
 % layer j in level i
\newcommand{\bucketServer}[2]{\ensuremath{B_{#1,#2}}\xspace}  % stored data items of server #2 in bucket #1
\newcommand{\requestServer}[2]{\ensuremath{D_{#1,#2}}\xspace} % data items server #2 has requests for in round #1
 % length #1 suffix of #2
  % length #1 suffix of #2
\newcommand{\rep}[1]{\ensuremath{s(#1)}\xspace} % bucket

\newcommand{\zerochild}{$0$\text{-}child\xspace}  % $0$-child
\newcommand{\onechild}{$1$\text{-}child\xspace} % $1$-child
\newcommand{\zerochildarg}[1]{\zerochild(#1)\xspace}  % $0$-child
\newcommand{\onechildarg}[1]{\onechild(#1)\xspace}  % $1$-child

  % $1$-child

 % shortcut for decimal representation
 % shortcut for binary representation

\newcommand{\fbucket}[2]{\ensuremath{bucket(#1,#2)}\xspace} % bucket computation function

  % set of data items to be propagated
  % set of data items to be propagated
  % set of data items not propagated

\newcommand{\whp}{w.h.p.\xspace}  % abbreviation for ``with high probability''

\newcommand{\serverset}{\ensuremath{\mathcal{S}}\xspace}  % set of servers

  % set of lookup requests
 % set of write requests
\newcommand{\lookupreq}[1]{\ensuremath{\textsf{lookup}(#1)}\xspace} % lookup request for d
\newcommand{\writereq}[1]{\ensuremath{\textsf{write}(#1)}\xspace}   % write request for d
   % write request for d
   % write request for d

   % level #1
  % layer j in level i
    % dec number of the #2-th bit block of data item #1
\newcommand{\datafrom}[1]{\ensuremath{\mathcal{D}(#1)}\xspace}  % set of data items with a write request in #1

\newcommand{\probenull}[0]{\textsf{probe($\cdot$)}\xspace}  % probe keyword, defines message type
\newcommand{\probe}[3]{\textsf{probe(\ensuremath{#1,#2,#3})}\xspace}  % probe keyword, defines message type -- example: probe(d,i,t)
\newcommand{\decodemsg}[3]{\textsf{decode(\ensuremath{#1,#2,#3})}\xspace} % decode keyword, defines message type  -- example: dedcode(d,i,t)
\newcommand{\decodemsgnull}[0]{\textsf{decode($\cdot$)}\xspace} % decode keyword, defines message type
\newcommand{\decodecheckmsg}[2]{\textsf{decodeCHECK(\ensuremath{#1,#2})}\xspace}  % decode keyword, defines message type
\newcommand{\decodecheckmsgnull}[0]{\textsf{decodeCHECK($\cdot$)}\xspace} % decode keyword, defines message type
\newcommand{\congmsg}[1]{\textsf{cong(\ensuremath{#1})}\xspace} % cong keyword, defines message type
\newcommand{\failmsg}[3]{\textsf{fail(\ensuremath{#1,#2,#3})}\xspace} % fail keyword, defines message type --- example: fail(d,i,t)
\newcommand{\failmsgnull}[0]{\textsf{fail($\cdot$)}\xspace} % fail keyword, defines message type --- example: fail(d,i,t)
\newcommand{\notexistsmsg}[1]{\textsf{notexists(\ensuremath{#1})}\xspace} % notexist keyword, defines message type
  % notexist keyword, defines message type

  % decoding depth
 % virtual decoding depth
 % comments in pseudo code
 % uses textsf

\newcommand{\msgfull}[0]{\ensuremath{\textsf{full}}\xspace} % full keyword, defines message type
\newcommand{\msgpartly}[1]{\ensuremath{\textsf{partly(#1)}}\xspace} % full keyword, defines message type

\newcommand{\nodeatlvl}[3]{\ensuremath{s_{#1}^{(#2)}(#3)}\xspace} 
\newcommand{\serversinvolved}[3]{\ensuremath{\Gamma_{#1,#2}(#3)}\xspace}
\newcommand{\datap}[2]{\ensuremath{#1_{#2}}\xspace}  %% data item piece (currently: d_i for {d}{i})
\newcommand{\timestamp}[1]{\ensuremath{t_{#1}}\xspace}   %% timestamp (currently: t_d for {d})

\newcommand{\num}[1]{\ensuremath{\textsf{num}_{#1}}}  % number of specific data items (used in counting section)

\newcommand{\curphase}{\ensuremath{z}}
\newcommand{\subphase}{sub-phase\xspace}

\newcommand{\numbercrashedservers}{\ensuremath{\Theta(n^{1/\log\log{n}})}\xspace}

%\newcommand{\claimrefexpander}[0]{2\xspace}
%\newcommand{\claimrefoutdated}[0]{2\xspace}
%\newcommand{\claimrefcongested}[0]{3\xspace}

% newcommand / newtheorem
\newtheorem{theorem}{Theorem}[section]
\newtheorem{lemma}[theorem]{Lemma}
\newtheorem{claim}[theorem]{Claim}

\newtheorem{corollary}[theorem]{Corollary}
\newtheorem{definition}[theorem]{Definition}

%%%%%%%%%%%%%%%%%%%%%%%%%%%%%%%%%%%%%%%%%%%%%%%%%%%%%%%%%%%%%%%%%%%%%%%%%
% Titlepage
%%%%%%%%%%%%%%%%%%%%%%%%%%%%%%%%%%%%%%%%%%%%%%%%%%%%%%%%%%%%%%%%%%%%%%%%%%
\usepackage{url}

% first the title is needed
\title{\Large
\textrm{
RoBuSt: A Crash-Failure-Resistant \\
Distributed Storage System\footnote{This work was partially supported by the German Research Foundation (DFG) within the Collaborative Research Center ``On-The-Fly Computing'' (SFB 901) and by the EU within FET project MULTIPLEX under contract no. 317532.}\\
\small Revised full version
}}

\author{
\normalsize Martina Eikel
\and\normalsize  Christian Scheideler
\and\normalsize  Alexander Setzer
}

\date{
	\normalsize
	University Paderborn, Germany
}

\begin{document}

\maketitle

\begin{abstract}
In this work we present the first distributed storage system that is provably
robust against crash failures issued by an adaptive adversary, i.e., for each
batch of requests the adversary can decide based on the entire system state
which servers will be unavailable for that batch of requests.
Despite up to $\gamma n^{1/\log\log n}$ crashed servers, with $\gamma>0$
constant and $n$ denoting the number of servers, our system can
correctly process any batch of lookup and write requests (with at most a
polylogarithmic number of requests issued at each non-crashed server) in at
most a polylogarithmic number of communication rounds, with at most
polylogarithmic time and work at each server and only a logarithmic storage
overhead.

Our system is based on previous work by Eikel and Scheideler (SPAA 2013), who
presented IRIS, a distributed information system that is provably robust
against the same kind of crash failures. However, IRIS is only able to serve
lookup requests.
Handling both lookup and write requests has turned out to require major changes in the design
of IRIS.
\end{abstract}

\section{Introduction}

%%%%%%%%%%%%%%%%%%%%%%%%%%%%%%%%%%%%%%%%%%%%%%%%%%%%%%%%%%%%%%%%%%%%
% Crash failures
One of the main challenges of a distributed system is that it is able to work
correctly even if parts of the system fail to work. If a server experiences a
\emph{crash failure} it becomes unavailable to the other servers, i.e., it
does not issue or respond to requests any more. Crash failures can be
temporary or permanent, and if it is temporary, a server may either be back to
its state when it crashed, or it may have lost all of its state. We will focus
on crash failures where, whenever a server becomes available again, it is back
to its state when it crashed. This is a reasonable assumption since for
commercial servers it is extremely rare that their state cannot be recovered.
However, a temporary unavailability is not that uncommon and can have many
causes such as maintenance work, hardware or software glitches, or
denial-of-service attacks. Especially denial-of-service attacks can be a
serious threat because they are normally unpredictable, hard to prevent and
they can cause the unavailability of a server for an extended period of time.

Predominant approaches in information and storage systems to deal with the threat of crash
failures are to use redundancy: information that is replicated among multiple
machines is likely to remain accessible even if some servers are unavailable.
Unfortunately, in systems that consist of thousands of servers a complete
replication of the data over all servers is not feasible. Hence, one needs to
find an appropriate tradeoff between the amount of redundancy and the number
of crashed servers the system can handle. One can easily show that if
$\Theta(\log n)$ copies of a data item are placed randomly among $n$ servers,
and these random positions are not known to the adversary, then any strategy
of the adversary that blocks half of the servers will not block all of the
copies, with high probability\footnote{``With high probability'', or short,
``w.h.p.'', means a probability of at least $1-1/n^c$ where the constant $c$
can be made arbitrarily large.}. The situation is completely different,
however, when considering an adaptive adversary, i.e., someone who has
complete knowledge about the system.

In a previous work, Eikel and Scheideler \cite{iris} presented a distributed
information system, called IRIS, that just needs a constant storage redundancy
in order to be robust against an adaptive adversary that can crash up to
$\numbercrashedservers$ servers. Unfortunately, the system lacks the important
ability to handle write requests, i.e., to add, remove and update data items.
This work solves this problem.

%%%%%%%%%%%%%%%%%%%%%%%%%%%%%%%%%%%%%%%%%%%%%%%%%%%%%%%%%%%%%%%%%%%%
\subsection{Model and Preliminaries}

We assume that the storage system consists of a static set $\serverset =
\{s_1, \dots, s_n\}$ of $n$ reliable servers of identical type. The servers
are responsible for storing the data as well as handling the user requests. We
assume that all data items are of the same size, and that any data item $d$ is
uniquely identified by a key $key(d)$. The universe of all possible keys is
denoted by $U$, and $m := |U|$ is assumed to be polynomial in $n$.
Furthermore, we assume that the size of the data items is at least $\Omega(\log n\log m)$. There are two types of user
requests: \lookupreq{k} for $k \in U$, and \writereq{k,d} for $k \in U$ and a
data item $d$. The user can issue a request by sending it to one of the
servers in $\serverset$. Given a \lookupreq{k} request, the system is supposed
to either return the data item $d$ with $key(d)=k$, or to return \textsf{NULL}
if no such data item exists. Given a \writereq{k,d} request, the system is
supposed to store data item $d$ with key $k$ such that subsequent
\lookupreq{k} requests can be answered correctly. Note that with a
\writereq{\cdot} request the user can also update or remove data.

Every server knows about all other servers and can therefore directly communicate with any one of them.
This does not endanger scalability since millions of IP addresses can easily be stored in main memory in any reasonable computer today and we assume the set of servers to be static.
We use the standard synchronous message passing model for the communication between the servers.
That is, time proceeds in synchronized {\em communication rounds}, or simply {\em rounds}, and in each round each server first receives all messages sent to it in the previous round, processes all of them, and then sends out all messages that it wants to send out in this round.
Note that assuming the synchronous model is just a simplification and that our protocols only require the message delays to be bounded.
In addition, we use the synchronous model because describing all protocols in an asynchronous setting would significantly blow up the construction and would hide the main innovations behind our system.
We assume that the time needed for internal computations is negligible, which is reasonable as the operations in the protocols we describe are simple enough to satisfy this property.

For the crash failures, we assume a \emph{batch-based} adaptive adversary.
This means the following:
We assume that time is divided into \mydef{periods} consisting of a polylogarithmic number of rounds.
The adversary has complete knowledge of the current system, but cannot predict the (future) random choices of the system.
Based on his knowledge, he can select an arbitrary set of $O(n^{1/\log\log n})$ servers to be crashed.
A server that is crashed will not send any message nor react to messages sent from other servers.
We assume that the servers have a failure detector that allows them to determine whether a server is crashed so that statements like ``if server $i$ is crashed then \ldots'' are allowed in the protocol.
Note that assuming bounded message delays, failure detection can simply be implemented using timeouts.
After that, the adversary may issue an arbitrary collection of requests to the
system by sending up to $\omega\in\mathbb{N}$ \lookupreq{\cdot} requests and
up to $\omega$ \writereq{\cdot} requests to each server. In order to keep the
presentation of RoBuSt as clear as possible, throughout this work we assume
$\omega = 1$. RoBuSt can in principle handle arbitrary values of $\omega$, but
in that case the bound on the work required by each server for serving all
requests must be multiplied with $\omega$.\footnote{Note that our system would
not be able to answer all requests with at most polylogarithmic work if
$\omega > polylog(n)$, but this would trivially hold for any storage system.}
There are no further limitations, i.e., the keys selected by the adversary may
or may not be associated with data items stored in the system, and the
adversary is also allowed to issue multiple lookup requests for the same key.
The task of the system is to correctly handle \emph{all} of these requests.
We assume that any period is long enough such that the system has enough time
to perform all necessary computations and to answer all requests. After any
period, the adversary may select a different set of $\numbercrashedservers$
servers to be crashed. We assume that the set of crashed servers does not
change during a fixed period, which is why we use the notion of a batch-based
adaptive adversary. Of course, allowing crash failures at arbitrary times
would make the model much stronger, yet it would significantly complicate the
system design, which is why we leave this to future research. Note that we
assume links between intact (i.e., non-crashed) servers to be reliable.
Unreliable links can be dealt with using, for example, at-least-once delivery
or error correction strategies, which are out of scope for our design since it
is already complex enough.

In order to measure the quality of the storage system, we introduce the following notation.
A storage strategy is said to have a {\em redundancy} of $r$ if $r$ times more storage (including any control storage) is used for the data than storing the plain data.
We call a storage system {\em scalable} if its redundancy is at most polylog($n$), {\em efficient} if any collection of lookup and write requests specified by the adversary can be processed correctly in at most polylog($n$) many communication rounds in which every server sends and receives at most polylog($n$) many messages of at most polylog($n$) size, and {\em robust} if any collection of lookup and write requests specified by the adversary can be processed correctly even if a set of up to $\numbercrashedservers$ servers specified by the adversary crash.

%%%%%%%%%%%%%%%%%%%%%%%%%%%%%%%%%%%%%%%%%%%%%%%%%%%%%%%%%%%%%%%%%%%%
\subsection{Related Work}
Over the past years, distributed storage systems have gained a lot of importance.
Popular examples include the storage solutions offered by Google, Apple, or Amazon.
Since availability and retrievability of the stored data is a key aspect of
distributed storage systems, these systems should be able to work correctly
despite common failures. Often failures in distributed systems are divided
into the following types \cite{christian-1991}: crash failures, omission
failures, timing failures, and Byzantine failures. In crash failures the
affected component (for instance a server) completely stops working.
In receive (send) omission failures the affected component cannot receive
(send) any further messages. A timing failure leads a component to not respond
within a specified time interval. In case of a Byzantine failure, the affected
component may react in an arbitrary, even malicious manner.

This work focuses on crash failures. Many works dealing with crash failures in
distributed systems focus on crash failure recovery and crash failure
detection \cite{sistla-1989,leners-2011,gupta-2001}. But to the best of our knowledge, no
previous work has considered how to secure a distributed
storage system against many (e.g., more than a polylogarithmic number)
simultaneous crash failures controlled by an adaptive adversary while using
only polylogarithmic work, time and redundancy. That is, we do not seek to
prevent failures or attacks, but rather focus on how to maintain a good
availability and performance even in spite of them. Our system is based on the
distributed hash table (DHT) paradigm
(e.g.,~\cite{pagoda,pastry,skipnet,can,techchord}), with the additional twist
of using coding and arranging the used DHTs in an appropriate structure.
Various systems based on DHTs that are resistant against Denial-of-Service
(DoS) attacks (which represent a special type of crash failures) have already
been proposed \cite{out6,out8,out13}. But these do not work for adaptive
adversaries. The first DHTs that are robust against past insider crash
failures were proposed in \cite{AS-2007,scheideler2009}, where a past insider
only has complete knowledge of the information system up to some {\em past}
time point $t_0$. For this kind of insider, it is possible to design an
information system so that any information that was inserted or last updated
{\em after} $t_0$ is safe against crash failures
\cite{AS-2007,scheideler2009}. But the constructions proposed in these papers
would not work at all for a current insider because they are heavily based on
randomization to ensure unpredictability.
%
% IRIS
Eikel and Scheideler were the first to present a distributed information
system, called IRIS, that is provably robust even against a current insider
that crashes up to $\Theta(n^{1/\log\log{n}})$ servers.
The authors showed that IRIS can correctly answer any set of lookup requests
(with one request per server that is not crashed) with
polylogarithmic time and work at each server and only a constant redundancy.
Still it remained open whether it is possible to design a distributed storage
system that can efficiently handle lookup and write requests under the
presence of crash failures. We answer this question positively by proposing
such a system.

%%%%%%%%%%%%%%%%%%%%%%%%%%%%%%%%%%%%%%%%%%%%%%%%%%%%%%%%%%%%%%%%%%%%
\subsection{Our Contribution}
We present the first scalable distributed storage system, called \mydef{Robust Bucket Storage (in short RoBuSt)}, that is provably robust against adaptive crash failures and that supports both lookup and write requests.
Concretely, we allow the adversary to have complete knowledge about the storage system and to have the power to crash any set of $\gamma n^{1/\log\log n}$ servers, for $\gamma > 0$ constant.
The task of the system is to serve any collection of lookup and write requests in an efficient way despite the crash failures.

RoBuSt expands some of the ideas in IRIS, a distributed storage system that we proposed in SPAA 2013 \cite{iris}.
The system presented in this work tolerates a number of crashed servers that is similar to the number of servers blocked by a DoS attack that the Basic IRIS version can tolerate and achieves comparable efficiency bounds (up to a logarithmic factor).
In contrast to IRIS, which can only handle lookup requests, RoBuSt is able to additionally handle write requests.
Although in the lookup protocol we are able to adapt some of the underlying ideas of IRIS, adding the write functionality required significant changes in the whole structure.
To simplify the description for readers who are familiar with IRIS, we try to re-use terminology whenever there are similarities (e.g., Probing Stage, Decoding Stage).

One aspect is that IRIS organizes data into layers of $n$ data items each, and each layer is encoded separately using distributed coding that involves all $n$ servers.
This means that whenever a data item needs an update, all $n$ servers have to update their information for the corresponding layer.
Since we allow any set of write requests, it may happen that every write request involves a different layer, which would create an enormous update work.
To solve this issue, in RoBuSt we store the data items in so-called buckets that are organized in a binary tree.
For each data item, there are a logarithmic number of buckets that are a potential storage location for the data item.
For a data item there may exist different versions of it in different buckets.
But our system ensures that the highest bucket (i.e., the bucket with minimum distance to the root in the underlying binary tree over the buckets) that contains a version of the data item always holds the most recent version.

Furthermore, IRIS uses a fixed set of hash functions to specify anchor locations for the data so that afterwards lookup requests can be served efficiently despite an adversarial DoS attack.
However, using fixed hash functions in RoBust would enable the adversary to annul the fair distribution of data in a bucket.
Therefore, RoBuSt chooses new, random hash functions whenever write requests have to be served.

Another complication is the fact that a server may not know whether its information is up-to-date. This is because at the time when write requests were executed that required an update in that server, the server might have been crashed.
Our organization of the data and our protocols ensure that any server that answers a request always returns the most recent version of a data item.

Nevertheless, RoBuSt makes sure that all data can still be efficiently found while the storage overhead is at most a logarithmic factor.

\begin{theorem}\label{theorem:main}
RoBuSt is a scalable and efficient distributed storage system that only needs a logarithmic redundancy to protect itself against batch-based adaptive crash failures in which up to $\gamma\cdot n^{1/\log\log n}$ servers can crash for a constant $\gamma>0$, \whp
\end{theorem}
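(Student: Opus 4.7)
The plan is to decompose the theorem into its three constituent claims --- scalable (redundancy at most polylog$(n)$), efficient (any admissible batch handled in polylog$(n)$ rounds with polylog$(n)$ work/messages per server), and robust (all of the above hold even when an adaptive batch adversary crashes $\gamma n^{1/\log\log n}$ servers) --- and establish each in turn via a sequence of lemmas, following the usual Probing Stage / Decoding Stage template inherited from IRIS but adapted to the bucket-tree data layout.

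First I would prove the redundancy bound structurally. The data items are stored in buckets arranged in a binary tree, and each data item has only a logarithmic number of candidate buckets along some root-to-leaf path. Combined with the per-bucket encoding (which, like IRIS, adds only a constant overhead over the plain-data size) this gives total redundancy $O(\log n)$, independent of the adversary. This is the clean combinatorial part and should follow directly from the construction of the bucket hierarchy and the encoding parameters.

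Next I would tackle the lookup protocol. Here I would re-use the IRIS-style argument: show that the fresh random hash functions distribute probe targets so that, even when $\gamma n^{1/\log\log n}$ servers are crashed, each bucket has enough surviving servers to permit decoding of its codeword \whp, via a Chernoff/union-bound argument over the $\mathrm{poly}(n)$ many relevant events. The new element compared to IRIS is the tree-of-buckets invariant stating that the \emph{highest} occupied bucket for a key stores the most recent version; I would prove this as a separate structural lemma (by induction on the sequence of write batches) and then show that a lookup can determine the highest occupied bucket in polylog rounds by probing all $O(\log n)$ candidate buckets in parallel. Efficiency (polylog rounds and work) would follow from a load-balancing argument showing that each server is involved in at most polylog$(n)$ probes per batch \whp, again using the freshness of the hash functions.

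The main obstacle, and the place where most of the effort will go, is the write protocol. The difficulties are: (i) a server may have been crashed during earlier write batches, so its local information can be stale, yet it must never return a stale answer; (ii) writes must maintain the ``highest occupied bucket is most recent'' invariant even though up to $\gamma n^{1/\log\log n}$ servers miss the current update; (iii) fresh random hash functions must be re-installed at each write batch so the adversary cannot bias future buckets, and this re-installation itself has to be done under crash failures. My plan is to prove a sequence of invariants --- a version/timestamp invariant guaranteeing that stale servers can detect their staleness from the bucket-tree metadata, an update-propagation lemma guaranteeing that every write reaches the appropriate bucket in polylog rounds, and a re-randomization lemma showing that the new hash functions are unpredictable to the adversary at the moment the next batch is chosen --- and then combine them to conclude that after each write batch the system still satisfies the preconditions required by the lookup analysis. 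Once the three protocol-level analyses are in place, the theorem follows by applying them to an arbitrary admissible batch and taking a union bound over the polynomially many high-probability events.
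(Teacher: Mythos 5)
Your decomposition into scalability, efficiency, and robustness matches the paper's overall structure (the paper assembles the theorem from Corollary~2.2, the per-stage analyses of Sections~3 and~4, and Corollary~4.8), and you correctly identify the bucket-tree invariant, the timestamp mechanism, and the re-randomization of hash functions as the new ingredients relative to IRIS. However, your proposed proof of the lookup analysis has a genuine gap. You suggest showing that ``each bucket has enough surviving servers to permit decoding of its codeword w.h.p., via a Chernoff/union-bound argument over the poly($n$) many relevant events.'' This does not go through against a batch-based \emph{adaptive} adversary: at the start of a lookup period, the adversary already knows the hash functions installed during the most recent write to each bucket, and it chooses the crash set with that knowledge. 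The number of candidate crash sets is $\binom{n}{\Theta(n^{1/\log\log n})}$, which is super-polynomial, so you cannot union-bound over them. The paper avoids this by proving a \emph{combinatorial} property of the hash functions alone (Claim~4.6: w.h.p.\ they form a $(c/6,1/36)$-expander) and then arguing \emph{deterministically}, for every possible crash set, that this expansion caps the number of data items that can be blocked or congested at each level of the butterfly (Lemmas~4.3 and~4.10). A plain Chernoff bound is only valid in the paper for servers that are \emph{outdated} rather than currently crashed (Lemma~4.1), because the set of outdated servers for a bucket was fixed \emph{before} the bucket's hash functions were drawn; you conflate the two cases.

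Two smaller points. First, you propose probing all $O(\log n)$ candidate buckets for a key in parallel and selecting the highest one found, whereas the paper examines zones sequentially from the root and stops at the first hit (correct because the highest occupied bucket is freshest). Your variant could plausibly be made to work with the timestamp machinery, but it is a different protocol with a different congestion profile, and you would have to redo the probe-volume accounting. Second, your write-protocol sketch omits the bucket-overflow mechanism: when inserting the current batch would push a bucket past $2n$ items, the protocol must agree (via the distributed counting/selection in the appendix) on a subset of $n$ items sharing a fixed next bit and push them to a single child bucket. This step is what ties the $O(\log n)$ bucket depth to the redundancy bound you assert in your first paragraph and to the claim that every write reaches \emph{some} bucket in polylog rounds; without it, your ``update-propagation lemma'' is under-specified.
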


%%%%%%%%%%%%%%%%%%%%%%%%%%%%%%%%%%%%%%%%%%%%%%%%%%%%%%%%%%%%%%%%%%%%
\section{Underlying Datastructure}\label{sec:underlying-ds}

In the following, we assume keys are potentially from an address space of size at most $n^p$, i.e., we need $\Lambda:=p\log n$ bits for each address.
We introduce the following definitions:
For a data item $d$, denote the \mydef{address} of $d$ by $\key{d} = d_{p\log n-1}\ldots d_1 d_0\in\{0,1\}^{p\log n}$ and let $\bit{d}{i} := d_i$.

Our data structure is based on a binary tree with $\numblocks+1$ levels, so-called \mydef{zones}.
We denote the nodes of each zone as \mydef{buckets} where each bucket will hold a set of data items.
The internal storage strategy of the buckets is described in Section~\ref{sec:bucket-storage-strategy}.
Zone 0 consists of a single bucket, bucket $\bucket{\varepsilon}$.
Each bucket $B$ that is not in zone $\Lambda$ has two children, denoted by $\zerochildarg{B}$ and $\onechildarg{B}$.
For each data item $d$ there is not only a single possible bucket in which to store $d$
but there are $\numblocks + 1$ possible buckets for $d$, one in each zone.
Bucket $\bucket{\varepsilon}$ may hold any data item.
Any data item $d$ that may belong to bucket $B$ in zone $\ell$, may also belong to $\zerochildarg{B}$ if $\bit{d}{\ell} = 0$ or to $\onechildarg{B}$ if $\bit{d}{\ell} = 1$.
In the following, let $\mathcal{B}$ be the set of all buckets and let $\fbucket{z}{d}: \{0,\dots,\numblocks\} \times U \rightarrow \mathcal{B}$ be a function that returns the unique possible bucket of a data item $d$ at zone $z$. 
Initially, a bucket does not contain any data.
During the runtime of the system the following invariant is satisfied:
Each bucket, excluding bucket $\bucket{\varepsilon}$, stores either $0$ or between $n$ and $2n$ data items.
Bucket $\bucket{\varepsilon}$ stores at most $2n$ data items.

%%%%%%%%%%%%%%%%%%%%%%%%%%%%%%%%%%%%%%%%%%%%%%%%%%%%%%%%%%%%%%%%%%%%
\subsection{Internal Storage Strategy of the Buckets}\label{sec:bucket-storage-strategy}

The idea of storing a set $D$ of data items into a bucket $B$ is to reuse the basic concepts of the storage strategy for individual layers from IRIS \cite{iris}. 
Roughly speaking this strategy works as follows:
In order to achieve the desired robustness, we first create $c \geq 18\log{m}$ \mydef{pieces} $d_1,\ldots,d_c$ for each data item $d\in D$ using Reed Solomon coding.
Using $c$ hash functions chosen uniformly and independently at random, these pieces are then mapped to servers.
Finally, all these pieces are encoded with each other, such that at the end each intact server holds for each piece some parity information resulting from the encoding process.
Besides encoded data pieces each bucket $B$ additionally stores $c$ hash functions and a timestamp $t(B)$. The timestamp is used to handle out-dated information a server might hold if it has crashed in a previous period in which write requests were served.

%%%%%%%%%%%%%%%%%%%%%%%%%%%%%%%%%%%%%%%%%%%%%%%%%%%%%%%%%%%%%%%%%%%%
In the following we roughly describe the coding strategy presented in \cite{iris}.
The coding strategy is a block-based distributed strategy that follows the topology of a $k$-ary butterfly as described in the following.
For $k\in\N$ we use the notation $[k]=\{0,\ldots,k-1\}$.

\begin{definition}
For any $d,k \in \N$, the {\em $d$-dimensional $k$-ary butterfly} $BF(k,d)$ is
a graph $G=(V_k,E)$ with node set 
$V_k=[d+1]\times [k]^d$ and edge set $E$
with
\begin{align*}
  E= \{ & \{(i,x),(i+1,(x_1,\ldots,x_i,b,x_{i+2},\ldots,x_d))\} \\
   & \mid x=(x_1,\ldots,x_d) \in [k]^d,~i\in[d], \mbox{ and } b \in [k] \}.
\end{align*}
A node $u$ of the form $(\ell,x)$ is said to \mydef{be on butterfly level} $\ell$ of $G$.
Furthermore, $LT(u)$ is the unique $k$-ary tree of nodes reached from $u$ when going downwards the butterfly (i.e., to nodes on butterfly levels $\ell'>\ell$) and $UT(u)$ is the unique $k$-ary tree of nodes reached from $u$ when going upwards the butterfly.
Moreover, for a node $u$ at level $\ell$, let $BF(u)$ be the unique $k$-ary sub-butterfly of dimension $\ell$ ranging from butterfly level 0 to $\ell$ in $BF(k,d)$ that contains $u$.
\end{definition}

A visualization of a $k$-ary butterfly is given in Figure~\ref{fig:bf-visual}.

\begin{figure*}[htp]
\centering
% !TEX root = ../iris.tex

\tikzstyle{every picture}+=[node/.style={circle, minimum size=5, draw=black, inner sep=0,fill=white}]       % print a circle

\begin{tikzpicture}
	\def\nodexdist{0.4}
	\def\nodeydist{0.8}
	\def\numnodeslevel0{27}
	
	\foreach \x in {1,2,...,\numnodeslevel0} {
		% nodes
		\node[node] (\x-0) at (\x*\nodexdist,0) {}; % level 0 nodes
		\node[node] (\x-1) at (\x*\nodexdist,-\nodeydist) {}; % level 1 nodes
		\node[node] (\x-2) at (\x*\nodexdist,-2*\nodeydist) {}; % level 2 nodes
		\node[node] (\x-3) at (\x*\nodexdist,-3*\nodeydist) {}; % level 3 nodes		
	}	
	
	% Level names
	\node[empty] at (-0.6, 0) {Level 0};
	\node[empty] at (-0.6, -\nodeydist) {Level 1};
	\node[empty] at (-0.6, -2*\nodeydist) {Level 2};
	\node[empty] at (-0.6, -3*\nodeydist) {Level 3};
	
	% node labels
	\node[empty,yshift=1em] at (1-0) {$000$};
	\node[empty,yshift=2em] at (2-0) {$001$};
	\node[empty,yshift=1em] at (3-0) {$002$};
	
	\node[empty,yshift=2em] at (4-0) {$010$};
	\node[empty,yshift=1em] at (5-0) {$011$};
	\node[empty,yshift=2em] at (6-0) {$012$};
	
	\node[empty,yshift=1em] at (7-0) {$020$};
	\node[empty,yshift=2em] at (8-0) {$021$};
	\node[empty,yshift=1em] at (9-0) {$022$};
	
	\node[empty,yshift=2em] at (10-0) {$100$};
	\node[empty,yshift=1em] at (11-0) {$101$};
	\node[empty,yshift=2em] at (12-0) {$102$};

	\node[empty,yshift=1em] at (13-0) {$110$};
	\node[empty,yshift=2em] at (14-0) {$111$};
	\node[empty,yshift=1em] at (15-0) {$112$};
	
	\node[empty,yshift=2em] at (16-0) {$120$};
	\node[empty,yshift=1em] at (17-0) {$121$};
	\node[empty,yshift=2em] at (18-0) {$122$};
	
	\node[empty,yshift=1em] at (19-0) {$200$};
	\node[empty,yshift=2em] at (20-0) {$201$};
	\node[empty,yshift=1em] at (21-0) {$202$};

	\node[empty,yshift=2em] at (22-0) {$210$};
	\node[empty,yshift=1em] at (23-0) {$211$};
	\node[empty,yshift=2em] at (24-0) {$212$};

	\node[empty,yshift=1em] at (25-0) {$220$};
	\node[empty,yshift=2em] at (26-0) {$221$};
	\node[empty,yshift=1em] at (27-0) {$222$};

	% level 0 edges	
	\foreach \x in {1,2,...,\numnodeslevel0} {
		\pgfmathtruncatemacro{\mod}{mod(\x-1,3)+1} 			
		\pgfmathtruncatemacro{\xplusone}{\x+1}
		\pgfmathtruncatemacro{\xplustwo}{\x+2}
		\pgfmathtruncatemacro{\xminusone}{\x-1}
		\pgfmathtruncatemacro{\xminustwo}{\x-2}
		
		\draw (\x-0) -- (\x-1);
		\ifthenelse{\mod = 2}{
			\draw (\x-0) -- (\xplusone-1);
			\draw (\x-0) -- (\xminusone-1);			
		}{}
		
		\ifthenelse{\mod = 1}{
			\draw (\x-0) -- (\xplusone-1);
			\draw (\x-0) -- (\xplustwo-1);
		}{}

		\ifthenelse{\mod = 3}{
			\draw (\x-0) -- (\xminusone-1);
			\draw (\x-0) -- (\xminustwo-1);
		}{}
	}
	
	% draw ultra thick lines
	\draw[ultra thick] (10-0)--(11-1);
	\draw[ultra thick] (11-0)--(11-1);
	\draw[ultra thick] (12-0)--(11-1);
	\draw[ultra thick] (13-0)--(14-1);
	\draw[ultra thick] (14-0)--(14-1);
	\draw[ultra thick] (15-0)--(14-1);
	\draw[ultra thick] (16-0)--(17-1);
	\draw[ultra thick] (17-0)--(17-1);
	\draw[ultra thick] (18-0)--(17-1);
	
	% draw sub butterfly box
	\draw[dashed] (10-0.west) -- (10-2.west);
	\draw[dashed] (10-2.south) -- (18-2.south);
	\draw[dashed] (18-2.east) -- (18-0.east);
	\draw[dashed] (18-0.north) -- (10-0.north);
	
	% level 1 edges
	\draw (5-1) -- (5-2);
	\draw (5-1) -- (2-2);
	\draw (5-1) -- (8-2);
	
	\draw (2-1) -- (2-2);
	\draw (2-1) -- (5-2);
	\draw (2-1) -- (8-2);
	
	\draw (8-1) -- (8-2);
	\draw (8-1) -- (5-2);
	\draw (8-1) -- (2-2);
	
	\draw (14-1) -- (11-2);
	\draw[ultra thick] (14-1) -- (14-2);
	\draw (14-1) -- (17-2);
	
	\draw (11-1) -- (11-2);
	\draw[ultra thick] (11-1) -- (14-2);
	\draw (11-1) -- (17-2);
	
	\draw (17-1) -- (11-2);
	\draw[ultra thick] (17-1) -- (14-2);
	\draw (17-1) -- (17-2);
	
	% level 2 edges
	\draw[ultra thick, dotted] (17-2) -- (8-3);
	\draw[ultra thick, dotted] (17-2) -- (17-3);
	\draw[ultra thick, dotted] (17-2) -- (26-3);
	
	\draw (8-2) -- (8-3);
	\draw (8-2) -- (17-3);
	\draw (8-2) -- (26-3);
	
	\draw (26-2) -- (8-3);
	\draw (26-2) -- (17-3);
	\draw (26-2) -- (26-3);
\end{tikzpicture}
\caption{Visualization of a $k$-ary butterfly $BF(k,d)$ for $k=d=3$. For a
better readability most of the edges from level two and three are omitted.
The dashed box denotes the sub-butterfly $BF((2,111))$. 
The thick solid lines in the dashed box denote the edges of $UT((2,111))$.
The thick dotted lines denote the edges of $LT((2,121))$. } 
\label{fig:bf-visual}
\end{figure*}
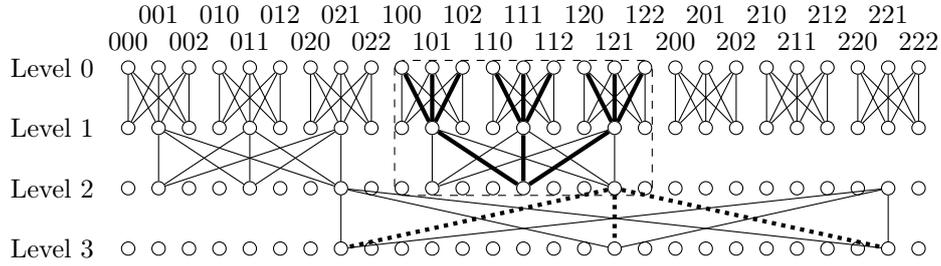

In the following let $BF(k,d)$ be a $k$-ary butterfly with $n=k^d$ and with server $s_i$, $i\in\{0,\ldots,n-1\}$, emulating the butterfly nodes $(0,i),\ldots,(d,i)$.
That is, whenever a butterfly node $(j,i)$, $j\in\{0,\ldots,d\}$ is supposed to perform an action or store data, this is done by server $s_i$.
We say a server $s$ is \mydef{connected via the $k$-ary butterfly} to another server $s'$, if there is an edge $(u,v)$ in the butterfly such that $u$ is emulated by $s$ and $v$ is emulated by $s'$.

While in IRIS each server holds $O(1)$ data pieces per layer, in our system each server holds $O(\log n)$ data pieces per bucket.
This is due to the fact that each bucket contains $O(n)$ data items and for each data item $c=\Theta(\log m)$ pieces are created and distributed evenly among the servers.
Hence, we simply concatenate the data pieces a server $s_i$ holds in a bucket $B$ and denote the resulting data \mydef{block} as $b_i$.

In order to encode the data blocks $b_0,\ldots,b_{n-1}$ assigned to the servers $s_0,\ldots,$ $s_{n-1}$ in bucket $B$, initially, $b_i$ is placed in node $(0,i)$ for every $i\in \{0,\ldots,n-1\}$.
Given that in butterfly level $\ell$ we have already assigned data blocks $d(\ell,x)$ to the nodes
$(\ell,x)$ we use the coding strategy presented in \cite{iris} to assign data blocks $d(\ell+1,x)$ to the nodes at butterfly level $\ell+1$.
The used coding strategy is based on some simple parity computations and ensures the following property: 
If at most one butterfly node $(\ell+1,x_j)$ from the set of nodes $\{(\ell+1,x_1),\ldots,(\ell+1,x_k)\}$ is crashed, then the information in the remaining nodes $(\ell+1,x_i)$, $i\in\{1,\ldots,k\}\backslash\{j\}$, suffices to recover $d(\ell,x_1),\ldots,d(\ell,x_k)$.
Furthermore, with Lemma~2.4 in \cite{iris} the storage amount of each server $s_i$, $i\in\{0,\ldots,n-1\}$, required for the encoding of a single bucket is upper bounded by $(1+e)z$, where $z$ denotes the maximum size of the data blocks stored at any server $s_j$, $j\in\{0,\ldots,n-1\}$.
Since there may exist outdated data items in the system, but for each level at most one, i.e. in total at most $\Lambda+1=O(\log n)$ many for each data item, the redundancy of our system increases to $O(\log n)$.

\begin{corollary}\label{corollary:storage-overhead}
RoBuSt has a redundancy of $O(\log n)$.
\end{corollary}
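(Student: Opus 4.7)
The corollary follows by combining two independent bounds, and my plan is to make each of them explicit and then multiply.

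First I would bound the storage overhead inside a single bucket. By the construction in Section~\ref{sec:bucket-storage-strategy}, each data item $d$ stored in a bucket is first expanded by Reed--Solomon into $c=\Theta(\log m)$ pieces whose combined size is only a constant multiple of $|d|$, so this step contributes $O(1)$ redundancy. The resulting data blocks $b_0,\ldots,b_{n-1}$ are then encoded by the $k$-ary butterfly scheme imported from~\cite{iris}; its Lemma~2.4 bounds every server's contribution for the bucket by $(1+e)z$, where $z$ is the maximum block size, yielding another constant factor. The $c$ hash functions and the timestamp $t(B)$ attached to the bucket occupy only $O(\log^2 n)$ bits, which is dwarfed by the $\Omega(n\log n\log m)$ bits of plain data that a non-empty bucket holds under the invariant (buckets store $\Theta(n)$ items) and the paper's assumption $|d|=\Omega(\log n\log m)$. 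Hence a single bucket uses storage within a constant factor of the plain data it represents.

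Next I would count, for a fixed data item $d$, how many buckets can simultaneously hold a version of $d$. By the definition of $\fbucket{z}{d}$, each zone $z\in\{0,\ldots,\Lambda\}$ provides exactly one candidate bucket for $d$, so $d$ can occupy at most $\Lambda+1=O(\log n)$ distinct buckets at once. This bound is essentially tight, since outdated versions of $d$ may linger in higher zones while a more recent version has been written further down in the tree; only the bucket closest to the root is guaranteed to carry the current value. Multiplying the $O(1)$ per-bucket overhead by this $O(\log n)$ multiplicity, and then summing over all distinct data items, inflates the plain-data storage by a factor of at most $O(\log n)$, which is precisely the claimed redundancy.

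There is no substantive obstacle here; the corollary is really a bookkeeping consequence of the IRIS analysis plus the paragraph preceding it that observes the per-zone uniqueness of $\fbucket{z}{d}$. The only items that genuinely need verification are that Reed--Solomon truly has constant overhead (MDS codes achieve this for any constant ratio between reconstruction threshold and number of pieces) and that the per-bucket control storage is dominated by the stored blocks, both of which I would dispatch in the first paragraph as above.
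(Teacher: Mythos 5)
Your proposal is correct and follows the same decomposition the paper uses: constant per-bucket redundancy (Reed--Solomon giving a constant blow-up, plus Lemma~2.4 of~\cite{iris} bounding the butterfly encoding overhead by $(1+e)z$), multiplied by the $O(\log n)$ bound on the number of zones, hence buckets, in which a single key can simultaneously have a (possibly outdated) version. The extra accounting you do for the per-bucket control storage (hash functions and timestamp) is a reasonable addition that the paper leaves implicit, and it does not change the argument.
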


%%%%%%%%%%%%%%%%%%%%%%%%%%%%%%%%%%%%%%%%%%%%%%%%%%%%%%%%%%%%%%%%%%%%
\section{The Write Protocol}\label{sec:write-protocol}
In the following let $D$ with $|D|\leq (1-\delta)n$ and $\delta <1/72\cdot n^{1/\log\log n}$, be the set of data items for which intact servers received write requests. 
For a data item $d$ that is stored in the system denote the $c$ pieces that have been created from $d$ using Reed Solomon coding as $d_1,\ldots,d_c$.
Furthermore, denote the server that is holding $d_1$ (after the pieces have been spread over the $n$ servers) as the server \mydef{maintaining} $d$.

%%%%%%%%%%%%%%%%%%%%%%%%%%%%%%%%%%%%%%%%%%%%%%%%%%%%%%%%%%%%%%%%%%%%
\subsection{Preprocessing Stage}\label{sec:write-preprocessing}

In this stage, for each crashed server $s_i$, a unique intact server is determined, 
denoted as the \mydef{representative} of $s_i$, such that at the end of this stage each crashed server is the representative of at most two other servers.
The idea of the representatives is to let them take over the roles of the according crashed servers in actions (e.g. routing, computations) the crashed servers were supposed to perform. 
For this, we additionally need to ensure that each intact server knows the representatives of all crashed servers it is connected to in the underlying $k$-ary butterfly.

The determination of the representatives and the introduction of the representatives to the appropriate servers can be done in the same manner as in the \emph{butterfly completion stage} of \cite{iris}, which can be carried out in $(2+o(1))\log{n}$ rounds with a congestion of at most $O(\log{n})$ (see Lemma~2.11 in \cite{iris}).
In contrast to \cite{iris}, we do not need to compute a so-called \emph{decoding depth} here that gives information about the minimum level of the butterfly that the decoding must be initiated from, which would take $O(\log{n})$ rounds.
In the following, we denote by $\rep{i}$ the representative of $s_i$ if $s_i$ is crashed or $s_i$ itself otherwise.

%%%%%%%%%%%%%%%%%%%%%%%%%%%%%%%%%%%%%%%%%%%%%%%%%%%%%%%%%%%%%%%%%%%%
\subsection{Writing Stage Overview}\label{sec:writing-stage-overview}
In order to keep the specification of our system simple, we first give a high-level overview of how a set of write requests is handled.
Further details are given in the following subsection.

The Writing Stage consists of up to $\numblocks + 1$ phases.
Each phase $\curphase \in \{0, \dots, \numblocks\}$ deals with a single bucket $B_\curphase$ from zone $\curphase$ only and receives a set of data items $D_\curphase$ to be inserted into $B_\curphase$.
At the beginning, $D_0 := D$ is the set of all data items for which there are write requests.
In the following, $\datafrom{B_\curphase}$ denotes the set of data items that are stored in bucket $B_\curphase$ (at the beginning of phase $\curphase$).
Phase $\curphase \in\{0,\ldots,\numblocks\}$ consists of the following steps.
\begin{enumerate}
  \item Completely decode $B_\curphase$ and send all decoded pieces of a data item $d\in\datafrom{B_\curphase}$ to the server maintaining $d$ (for details, see the appendix).
  \label{en:write-1}
  \item If $|\datafrom{B_\curphase} \cup D_\curphase| \leq 2n$: Add the data items from $D_\curphase$ to $\datafrom{B_\curphase}$, choose $c$ new hash functions $h_1,\ldots,h_c:U\to V$ uniformly at random for $B_\curphase$, and reencode $B_\curphase$ (see appendix and below).\label{en:write-2}
  \item Else ($|\datafrom{B_\curphase} \cup D_\curphase| > 2n$): 
  \label{en:write-3}
  \begin{enumerate}
    \item The intact servers agree on a subset $D_{\curphase+1}\subseteq\datafrom{B_\curphase}\cup D$ of size $n$ with the property that for all $d,d' \in D_{\curphase+1}$, $\bit{d}{\curphase}=\bit{d'}{\curphase}=b \in \{0,1\}$ (for details, see the appendix).
    \label{en:write-3-a}
    \item Reencode the data items in $(D_\curphase\cup\datafrom{B})\setminus D_{\curphase+1}$ in bucket $B_\curphase$ and\\
    choose $c$ new hash functions $h_1,\ldots,h_c:U\to V$ uniformly at random for $B_\curphase$.
    (see below)
    \label{en:write-3-b}
    \item Set $B_{\curphase+1} := \zerochildarg{B_{\curphase+1}}$ if $b=0$ and $B_{\curphase+1} := \onechildarg{B_{\curphase+1}}$ if $b=1$ and propagate the data items in $D_{\curphase+1}$ to the next phase
    (for details, see the appendix).
    \label{en:write-3-c}
  \end{enumerate}
\end{enumerate}

Each phase of the Writing Stage can be performed in $O(\log{n})$ rounds with a congestion of $O(\log{n})$ at each server in each round (see appendix).
Since there are at most $O(\log{n})$ phases in the Writing Stage, the overall runtime is $O(\log^2{n})$ rounds.

%%%%%%%%%%%%%%%%%%%%%%%%%%%%%%%%%%%%%%%%%%%%%%%%%%%%%%%%%%%%%%%%%%%%

\subsubsection{Encoding of a Bucket}\label{sec:encoding}
In the following we describe how a set of data items is reencoded into a bucket,
as required in step~\ref{en:write-2} and step~\ref{en:write-3-b}.
Note that the reencoding of a bucket does not only consist of the simple encoding of the data items belonging to that bucket but it consists of some additional steps, as described in the following.

First, in contrast to IRIS, $\rep{1}$ chooses $c$ hash functions $h_1, \dots, h_c : U \rightarrow V$ uniformly at random that will be used to map data pieces of this bucket to servers.
While in IRIS the hash functions that map data pieces to servers are never changed, we need to choose new hash functions for a bucket $B$ whenever $B$ is (re)encoded.
The reason for this is that otherwise the adversary would be able to generate write requests that overload certain servers.

Note that the hash functions need to satisfy certain expansion properties, but if $c$ is chosen sufficiently large ($c \geq 18\log{m}$) they do so, \whp (more information is provided in the appendix).
After that, $\rep{1}$ distributes the $c$ hash functions to all other intact servers $\rep{i}$. 
This distribution can be realized by simply broadcasting the hash functions in the $k$-ary butterfly from level $\log_kn$ to level $0$.
In addition, $s_1$ distributes a current timestamp $t(B_\curphase)$ to all other intact servers and each intact server $\rep{i}$ sets its current timestamp for bucket $B_\curphase$ to that value.
Each server $\rep{i}$ now creates for each data item which it maintains or which it has received write requests for and which are not propagated to the next phase the $c$ pieces $d_1,\dots,d_c$ of $d$ using Reed Solomon coding (Section~\ref{sec:underlying-ds}).
Here, $d_j$, $j\in\{1,\ldots,c\}$, is supposed to be sent to the server $s'$ responsible for $h_j(d)$ or to its representative if $s'$ is crashed.
Unfortunately, a server $\rep{i}$ does not necessarily know the representative of the server $s'$ if that server is crashed.
Thus, instead of sending the data pieces directly, the servers initiate a bottom-up routing in the underlying $k$-ary butterfly in order to determine the representative of $h_j(d)$ for each $1 \leq j \leq c$.
Obviously, this takes only $\log_k{n}$ rounds and can be performed with a congestion of $O(k)$ per node.
Once $\rep{i}$ knows the representative of $h_j(d)$, it directly sends $d_j$ to $h_j(d)$ for all $1 \leq j \leq c$.

After the pieces of data items have been distributed, the servers encode the data items in $(\datafrom{B_\curphase} \cup D_\curphase) \setminus D_{\curphase+1}$ in a distributed fashion. 
Note that the set of data blocks for server $i$ in zone $z$ is completely overwritten for each server $\rep{i}$ in this process.
This can be done by a simple top-down approach using the coding strategy for IRIS (see Section~2.1 in~\cite{iris}).
In addition, we also store the timestamp of the bucket along with the data block by appending it to the composed data block.

The following lemma holds during the encoding step, regardless of the current phase. 

\begin{lemma}\label{lem:lookup_maxcrashedservers}
  Assume the adversary blocks less than $(\gamma/2) \cdot 2^{\log_k{n}}$ servers, with $\gamma = 1/36$. Then, for any data item $d$ that is (re-)written during the current period, and any level $0 \leq \ell \leq \log_k{n}$, there are at most $c/6$ pieces of $d$ that are mapped to sub-butterflies $BF(v)$ (for some $v$ at level $\ell$) with at least $\lceil2^{\ell-1}\rceil$ crashed servers in $BF(v)$, \whp
\end{lemma}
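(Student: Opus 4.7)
The plan is a two-step probabilistic argument: a deterministic counting bound on how many server columns can lie in ``bad'' sub-butterflies at any given level, followed by a Chernoff-type concentration over the freshly chosen random hash functions. Fix a data item $d$ being (re-)written in the current period, fix a level $\ell \in \{0,\dots,\log_k n\}$, and let $F$ denote the crashed set, so that by assumption $|F| < (\gamma/2)\cdot 2^{\log_k n}$ with $\gamma=1/36$. I will call a sub-butterfly $BF(v)$ with $v$ on butterfly level $\ell$ \emph{bad} if it contains at least $\lceil 2^{\ell-1}\rceil$ crashed servers.

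First I would count the number of server columns lying in bad sub-butterflies at level $\ell$. The sub-butterflies at level $\ell$ partition the $n$ columns into disjoint groups of $k^\ell$ columns, and each bad one contains at least $\max(1,2^{\ell-1})$ crashed servers, so there are at most $|F|/\max(1,2^{\ell-1})$ bad sub-butterflies, and hence at most $|F|\cdot k^\ell/\max(1,2^{\ell-1})$ servers lie in bad sub-butterflies. Plugging in $|F|<(\gamma/2)\cdot 2^{\log_k n}$ and $n=k^{\log_k n}$, a short calculation shows that the fraction of all servers lying in a bad sub-butterfly at level $\ell$ is at most $\gamma\cdot (2/k)^{\log_k n-\ell}\leq \gamma$ for $\ell\geq 1$, and at most $\gamma/2$ for $\ell=0$.

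Next, because the hash functions $h_1,\dots,h_c$ associated with $d$'s bucket are chosen uniformly and independently at random during the current period (hence after the adversary has committed to $F$ and independently of that choice), the values $h_1(d),\dots,h_c(d)$ form $c$ independent uniform random servers. Each piece therefore lands in a bad sub-butterfly at level $\ell$ with probability at most $\gamma$, so the number $X_\ell$ of pieces of $d$ mapped to bad sub-butterflies at level $\ell$ is stochastically dominated by $\mathrm{Bin}(c,\gamma)$ with mean at most $c\gamma = c/36$. A standard multiplicative Chernoff bound with deviation factor $6$ (since $c/6 = 6\cdot c/36$) then gives $\Pr[X_\ell > c/6]\leq \exp(-\Omega(c))$.

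Since $c \geq 18\log m$ and $m$ is polynomial in $n$, this tail is at most $1/n^\alpha$ for any desired constant $\alpha$, provided the constant $18$ in the lower bound on $c$ is chosen sufficiently large. A union bound over the $1+\log_k n = O(\log n)$ levels and the (at most polynomially many) data items (re-)written in the current period then yields the lemma w.h.p. The main obstacle is really the counting step: one must verify that the factor $(2/k)^{\log_k n-\ell}$ coming out of the substitution does not blow up as $\ell$ decreases, which is true precisely because $k\geq 2$, so that a single bound of $\gamma$ on the ``bad fraction'' holds uniformly across all levels $\ell$ simultaneously.
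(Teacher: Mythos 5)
Your proof is correct and follows essentially the same route as the paper: bound the fraction of bad sub-butterflies (equivalently, of servers they contain) at level $\ell$ deterministically by $\gamma$ using the crash budget, then apply Chernoff over the $c$ independent hash choices and union-bound over levels and data items. The only cosmetic difference is that you state the bound in terms of server columns while the paper states it in terms of sub-butterflies, and you spell out the Chernoff/union-bound step that the paper leaves implicit.
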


\begin{proof}
 In the following, we denote a sub-butterfly $BF(v)$ for some $v \in V$ at level $\ell$ as \emph{blocked at level $\ell$} if at least $\lceil2^{\ell-1}\rceil$ servers in $BF(v)$ are crashed (note that we need the ceiling function only for the special case $\ell = 0$).
 Let $d$ be a data item, and let $0 \leq \ell \leq \log_k{n}$ be a fixed level in the underlying $k$-ary butterfly.
 First of all, we show that the fraction of blocked sub-butterflies at level $\ell$ is at most $\gamma$.
 Using the Chernoff bounds \cite{chernoff}, we can conclude from this that the number of pieces of $d$ that are mapped to blocked sub-butterflies are at most $c/6$ with high probability.
  
 Recall that each sub-butterfly at level $\ell$ contains exactly $k^\ell$ servers.
 Obviously, for $\ell = 0$, the fraction of crashed servers at level $\ell$ is upper bounded by $\gamma / 2 < \gamma$.
 Thus, in the following, we assume $1 \leq \ell \leq \log_k{n}$.
 %In the following, we denote a sub-butterfly $BF(v)$ for some $v \in V$ as \emph{blocked} if more than $2^\ell/2$ servers in $BF(v)$ are blocked.
 Let $b$ be the number of blocked sub-butterflies at level $\ell$.
 Then, there exist at least $b \cdot 2^{\ell-1}$ crashed servers.
 On the other hand, the adversary can block only less than $\gamma/2 \cdot 2^{\log_k{n}}$ servers.
 Hence, $b \cdot 2^{\ell-1} < \gamma \cdot 2^{\log_k{n}-1}$ which is equivalent to $b < \gamma \cdot 2^{\log_k{n}-\ell}$.
 Recall that there are exactly $k^{\log_k{n}-\ell}$ sub-butterflies at level $\ell$.
 This yields that the fraction of blocked sub-butterflies at level $\ell$ is at most $\gamma \cdot \frac{2^{\log_k{n}-\ell}}{k^{log_k{n}-\ell}} \leq \gamma$.
Using the Chernoff bounds it is easy to show that at most $c/6$ pieces of $d$ are mapped to a blocked sub-butterfly, \whp
 
 % Next, for $1 \leq i \leq c$, we define a random variable $X_i \in \{0, 1\}$, where $X_i = 1$ if and only if the $i$-th piece of $d$ is mapped into a blocked sub-butterfly.
 % Further we define the random variable $X := \sum_{i=1}^{c} X_i$, which counts the number of pieces of $d$ that are mapped into a blocked sub-butterfly.
 % From the previous paragraph, we know that $Pr[X_i=1]\leq \gamma$, which implies $E[X] \leq \gamma \cdot c$.
 % Since $X_1, \dots, X_c$ are independent binary random variables, we can apply the Chernoff bounds, which yield $Pr[X \geq c/6] \leq e^{-(\frac{1}{6\gamma}-1)\gamma c/3} \leq e^{-(5/108) \cdot c}$.
 % Due to $c = \Theta(\log{m})$, this can be made arbitrarily small. 

\end{proof}

The lemma plays an important role in the proof of the correctness of the Lookup Protocol.

%%%%%%%%%%%%%%%%%%%%%% LOOKUP %%%%%%%%%%%%%%%%%%%%%%%%%%%%%%%%%%%%%%%%%%%%%%%%%%%%

\section{The Lookup Protocol}\label{sec:lookup-protocol}
In order to keep the specification of our system simple, we provide the description of the lookup protocol as a separate protocol that is executed after the execution of the Write Protocol.
The lookup protocol is divided into two stages: the Preprocessing Stage (Section~\ref{sec:lookup-preprocessing}) and the Zone Examination Stage (Section~\ref{sec:lookup-zone-examination}).
The former is similar to the Preprocessing Stage of the Write Protocol (Section \ref{sec:write-preprocessing}).
The latter is performed for each zone individually and split into two further stages: the Probing Stage and the Decoding Stage.
The basic idea of the Probing Stage is to answer a request by directly collecting a sufficient number of data pieces.
If this is not possible, either because too many of the servers holding a piece are crashed or because of congestion, the Decoding Stage tries to recover a data item by utilizing the distributed coding described in Section \ref{sec:bucket-storage-strategy}.
Note that both a Probing Stage as well as a Decoding Stage can be found in IRIS (\cite{iris}), too.
While they match in their general structure, there are important differences that are caused by the differences in the underlying structure and the implications of the write functionality.
For example, servers may now store obsolete data items without being aware of that.

\subsection{The Preprocessing Stage}\label{sec:lookup-preprocessing}
The Preprocessing Stage is exactly the same as in Section~\ref{sec:write-preprocessing}.
If at least one write request has been handled in the current period, we can thus skip this part and re-use the established $k$-ary butterfly and the unique representatives.

\subsection{The Zone Examination Stage}\label{sec:lookup-zone-examination}

In the following let $\mathcal{D}$ be the set of data items for which a lookup request arrived at an intact server.
The idea of this stage is to successively perform a lookup for each $d\in\mathcal{D}$ in each zone until a copy of $d$ has been found and returned to the appropriate server.
The zone examination stage is performed for at most $\numblocks + 1$ zones starting with zone $0$.

In each phase $z\in\{0,\ldots,\numblocks\}$, beginning with $z=0$, each server with an unserved lookup request for some data item $d$ initiates a lookup request for $d$ in bucket $\fbucket{z}{d}$.
Any server that receives a copy of the data item it requested during the lookup in zone $z$, as described in the following, returns that copy and is finished.
All remaining lookup requests are handled in the next phase, phase $z:= z + 1$.
This procedure is repeated until each lookup request is served.

Handling a set of lookup requests in one phase $z$ is done by performing the Probing Stage and the Decoding Stage as described in the following.

%%%%%%%%%%%%%%%%%%%%%%%%%%%%%%%%%%%%%%%%%%%%%%%%%%%%%%%%%%%%%%%%%%%%%%%%%%%%%%%%
\subsubsection{Probing Stage}\label{sec:lookup-probing}

In the following let $s$ be an intact server that has an unserved lookup request for a data item $d$ at the beginning of phase $z$. 
The idea of the Probing Stage is to either achieve $c/3$ up-to-date pieces such that $d$ can be recovered. Or to assign the request for $d$ to a level $\{1,\ldots,\log_kn\}$ (as defined later) in order to further handle the request in the next stage, the Decoding Stage.
In the following, for a server $s'$, an index $i\in\{1,\ldots,c\}$, and a data piece $d'$ we denote by $P_i(s',d')$ the unique path of length $\log_kn$ in the $k$-ary butterfly from the butterfly node on level $\log_kn$ emulated by $s'$ to the butterfly node on level $0$ emulated by the server that is responsible for $h_i(d')$.

On a high level view, in phase $z$, server $s$ performs the following steps.
\begin{enumerate}
  \item Acquire current hash functions and timestamp \timestamp{d} for bucket $\fbucket{z}{d}$.\label{en:probing-acquire-hash-functions}

  \item Choose $c$ intact servers $s(d_1), \dots, s(d_c)$ uniformly and independently at random.
  \label{en:probing-choose-servers}

  \item Send a \probe{d}{i}{\timestamp{d}} message to $s(d_i)$, $i\in\{1,\ldots,c\}$, in order to initiate the forwarding of the \probenull message along the $c$ paths $P_i(s(d_i),d_i)$.
  \label{en:probing-initiate}
\end{enumerate}
Note that acquiring the hash functions in step~\ref{en:probing-acquire-hash-functions} is necessary since $s$ may have been crashed in the last period in which a write occured in bucket $\fbucket{Z}{d}$ (at which the hash functions were replaced).
% Step 1
Acquiring the current hash functions and the timestamp works as follows:
First of all, $s$ randomly chooses $\kappa := \Theta(\log{n})$ intact servers and asks them for their timestamp in bucket $\fbucket{Z}{d}$.
The intact servers can be found in $O(1)$ communication rounds, \whp, by selecting $\kappa$ random servers in each round until $\kappa$ intact servers have been found.
Let $\timestamp{d}$ be the maximum timestamp $s$ received.
If $\timestamp{d}$ is greater than the timestamp $s$ stores for $\fbucket{Z}{d}$, $s$ knows that it does not have the current hash functions and asks one server from which it received $\timestamp{d}$ for the $c$ hash functions for bucket $\fbucket{Z}{d}$.
Note that during this process each server only receives $O(\log{n})$ requests throughout this process, \whp

% Step 2
Once $s$ knows the correct hash functions, its goal is to retrieve at least $c/3$ pieces of $d$.
Since contacting the servers holding the $c$ pieces of $d$ directly may cause a too high congestion at these servers, we use the method of forwarding $c$ probes from uniformly chosen intact servers $s(d_1),\ldots,s(d_c)$ to the servers responsible for the $c$ pieces of $d$ along the $c$ paths $P_1(s(d_1),d_1),\ldots,P_c(s(d_c),d_c)$ (step~\ref{en:probing-choose-servers}, step~\ref{en:probing-initiate}).
Analogously to step~\ref{en:probing-acquire-hash-functions} choosing the $c$ intact servers in step~\ref{en:probing-choose-servers} takes $O(1)$ communication rounds, \whp

In the following we describe how the nodes from the paths $P_1(s_1,d_1),\ldots,$ $P_c(s_c,d_c)$ react on incoming messages during this phase.
Let $u$ be a butterfly node on level $\ell\in\{0,\ldots,\log_kn\}$ that has received a \probe{d}{i}{\timestamp{d}} message.
In order to reduce redundancy $u$ combines probes for the same piece of $d$ (and thus the same target) and $u$ marks itself as the new origin of the probe (technique of \mydef{splitting and combining} \cite{iris}).
In the following we denote a butterfly node $u$ as \mydef{congested} if it has received more than $\alpha\cdot c$ \probenull messages for different probes, for a sufficiently large constant $\alpha>0$.
Whenever $u$ receives a \probe{d}{i}{\timestamp{d}} message, $u$ performs the following steps.

\begin{enumerate}
  \item If $u$ is congested: 
  \item\mbox{}\quad Stop forwarding the probe and send a \failmsg{d}{i}{\ell} message to the origin of\\
  \mbox{}\quad the probe message.
  \item Else:
  \item\mbox{}\quad If $\ell\neq 0$: Forward \probe{d}{i}{\timestamp{d}} message to the butterfly node on level $\ell-1$\\
  \mbox{}\quad\hspace*{3.5em} on the path $P_i(s(d_i),d_i)$.
  \item\mbox{}\quad If $\ell=0$: (probe has reached its destination)
  \item\mbox{}\quad\quad If $u$'s current version of bucket $\fbucket{Z}{d}$ has timestamp \timestamp{d} and the server\\ \mbox{}\quad\hspace*{0.7em} emulating $u$ is not just a representative of $u$:
  \item\mbox{}\quad\quad\quad If $u$ holds piece $d_i$ of $d$: Send requested piece $d_i$ to the origin of the\\
  \hspace*{13.1em} probe message.
  \item\mbox{}\quad\quad\quad Else: Send \notexistsmsg{d} message to the origin of the probe message.
  \item\mbox{}\quad\quad Else: Send \failmsg{d}{i}{0} to the origin of the probe message.%\\[-1em]
\end{enumerate}

If a butterfly node on level $\ell\in\{0,\ldots,\log_kn-1\}$ receives a data item, a \failmsgnull, or a \notexistsmsg{\cdot} message, it forwards this answer to the origin of the request to which this message was an answer to (along the same path that the request was routed).
A butterfly node on level $\log_kn$ emulated by $s(d_i)$, $i\in\{1,\ldots,c\}$, that received an answer for a probe for data piece $d_i$ simply forwards this answer to the server that initiated the forwarding of that probe.
These answers ensure that after $O(\log_kn)$ rounds the server $s$ 
that received a lookup request for a data item $d$
has received for all initially sent \probenull messages a piece of $d$, or a \notexistsmsg{d} message, or the level at which the probing failed.
Depending on which kinds of answers $s$ has received, it reacts as follows:

\begin{itemize}
  \item If $s$ received at least $c/3$ up-to-date pieces of $d$, $s$ recovers $d$ using Reed Solomon coding and answers the request.

  \item Else if $s$ receives a \notexistsmsg{d} message, $s$ answers that the requested data item does not exist in the system.

  \item Else if $s$ receives more than $2c/3$ \failmsg{d}{i}{0} messages, $s$ declares the request for $d$ to \mydef{belong to level $\ell$}, where $\ell \in \{1,\dots,\log_k{n}\}$ is the smallest level that contains at least $5c/6$ active probes for $d_i$, i.e., probes for $d_i$ that successfully passed the probing at level $\ell$ and all levels $\ell' > \ell$.
\end{itemize}

It is easy to see that the Probing Stage takes at most $O(\log{n})$ communication rounds per phase with at most $O(\log^2{n})$ congestion at every server in each round.
Note that if a data item belongs to a level $\ell$, then at least $5c/6$ of its probes successfully pass level $\ell$ and get deactivated later in the probing (i.e., in a level $\ell' < \ell$).
To this end, each data item can either be retrieved successfully (this is the case if $5c/6 > c/3$ pieces pass level $0$) or belongs to a level $1 \leq \ell < log_k{n}$.

For the proof of the correctness of the protocol, the following lemma plays an important role.
\begin{lemma}\label{lem:probe1}
  If the adversary can only block less than $(\gamma/2) \cdot 2^{\log_k{n}}$ servers, then for every $\ell \in \{1,\dots,\log_k{n}\}$, the number of data items belonging to level $\ell$ is at most $\gamma n/k^{\ell-1}$ with $\gamma = 1/36$.
\end{lemma}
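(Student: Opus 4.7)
The plan is to charge each data item $d$ belonging to level $\ell$ to ``bad'' sub-butterflies at level $\ell-1$, and then use the randomness of the hash functions together with Lemma~\ref{lem:lookup_maxcrashedservers} to limit how many items can accumulate enough bad sub-butterflies. First I would unpack the definition of ``belongs to level $\ell$'': by minimality of $\ell$ we have at least $5c/6$ active probes at level $\ell$, so at most $c/6$ probes of $d$ were stopped at levels $\ge\ell$; moreover, since the ``retrieve'' branch of the Probing Stage was not triggered, strictly fewer than $c/3$ pieces of $d$ were delivered. Subtracting yields that at least $c/2$ probes of $d$ are lost at some level $\le \ell-1$, either at a congested internal node at a level $\le\ell-1$ or at a failed level-$0$ destination.

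The second step exploits butterfly geometry: once a probe for piece $(d,i)$ has reached level $\ell-1$, it is confined for the remainder of its route to the unique dimension-$(\ell-1)$ sub-butterfly $BF(u)$ containing its destination $h_i(d)$. Hence every one of the $\ge c/2$ lost probes of $d$ is lost inside the sub-butterfly determined by its destination, so $d$ contributes $\ge c/2$ pieces whose destinations fall into ``bad'' sub-butterflies at level $\ell-1$. I would then split the bad sub-butterflies into \emph{blocked} ones (containing at least $\lceil 2^{\ell-2}\rceil$ crashed servers) and \emph{unblocked} ones. Using that fewer than $(\gamma/2)\cdot 2^{\log_k n}$ servers are crashed, a short counting argument (using $k\ge 2$) shows at most $\gamma n/k^{\ell-1}$ sub-butterflies at level $\ell-1$ are blocked, and Lemma~\ref{lem:lookup_maxcrashedservers} bounds the pieces of $d$ in blocked sub-butterflies by $c/6$. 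Therefore at least $c/2 - c/6 = c/3$ pieces of $d$ land in \emph{unblocked} bad sub-butterflies.

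Because the hash functions $h_1,\ldots,h_c$ were chosen uniformly at random when the bucket was last (re)encoded, the $c$ destinations of $d$'s pieces are essentially independent and uniformly distributed over the $n/k^{\ell-1}$ sub-butterflies at level $\ell-1$. A Chernoff bound then gives that for any fixed set of at most $\gamma n/k^{\ell-1}$ sub-butterflies, the probability that $d$ hits $\ge c/3$ of them is at most $\exp(-\Omega(c))$; since $c \ge 18\log m$ and $m$ is polynomial in $n$, this is at most $1/n^{c_0}$ for an arbitrarily large constant $c_0$. Union-bounding over all at most $n$ data items and adding the blocked contribution gives the desired bound $N_\ell \le \gamma n/k^{\ell-1}$ with high probability.

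The key difficulty will be bounding the number of \emph{unblocked bad} sub-butterflies — those that are not crash-heavy but still contain a congested internal node or an outdated level-$0$ destination. Congested nodes are handled by a separate Chernoff/union-bound argument on the random distribution of probes along the butterfly paths: the splitting-and-combining technique keeps each internal node's expected load at $O(c)$, so only a $\gamma$-fraction of sub-butterflies at level $\ell-1$ can host any congested node w.h.p. Outdated destinations within a fixed bucket are limited by the number of servers crashed during that bucket's last (re)encoding, i.e., $O(n^{1/\log\log n})$, which is negligible compared to the other terms. The choice $\gamma=1/36$, together with the specific constants $5c/6$ and $2c/3$ in the Probing Stage rule, is precisely what allows these Chernoff tails to add up to the claimed bound.
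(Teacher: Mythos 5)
Your opening accounting matches the paper's: from $\geq 5c/6$ active probes at level $\ell$ and $<c/3$ delivered pieces one gets more than $c/2$ probes lost at levels $\leq \ell-1$, and Lemma~\ref{lem:lookup_maxcrashedservers} caps the pieces lost to outdated servers at $c/6$. The paper also bounds the number of congested sub-butterflies at level $\ell-1$ by a simple \emph{deterministic} count of total probes versus the congestion threshold, so no Chernoff bound is needed there. The real work is in the final probabilistic step, and that is where your argument breaks down.

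You propose to fix the set of ``bad'' sub-butterflies, apply a per-item Chernoff bound to the random choices $h_1,\ldots,h_c$, and union over the $\leq n$ requested items. But the bad set is \emph{not} independent of the hash functions: the congested sub-butterflies are determined by where all the probes route, which is itself a function of $h_1,\ldots,h_c$, and the adversary chooses whom to crash in the current period \emph{after} seeing the hash functions, so the ``blocked'' sub-butterflies are adversarially correlated with the random choice as well. A Chernoff bound against a hash-dependent target set, union-bounded only over $n$ items rather than over all admissible target sets, does not yield a valid w.h.p.\ statement. The paper avoids this circularity by first establishing (Claim~\ref{claim:bundles}) that the hash functions form a $(c/6,1/36)$-expander w.h.p.\ --- a \emph{uniform} deterministic property quantified over all key sets $S$ and all $c/6$-bundles $F$ --- and then deriving $|S| < \gamma n/k^{\ell-1}$ by a pigeonhole/contradiction argument using $c/6$-bundles, which is immune to any adaptive dependence of the bad set on the hash functions. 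Your proposal never identifies the need for such a uniform expansion statement, which is the crux of the proof. A secondary overextension: Lemma~\ref{lem:lookup_maxcrashedservers} is proved under the assumption that the hash functions are sampled \emph{after} the relevant crash set is fixed, so in the lookup setting it bounds only the outdated servers (those crashed when the bucket was last encoded), not the adversary's current crash set; you use it for currently blocked sub-butterflies, which the lemma does not support, whereas the paper handles current crashes at level~$0$ again via the expander property.
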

The general idea and structure of the proof of Lemma~\ref{lem:probe1} is based on the proof of Lemma~2.16 in \cite{iris}.
In contrast to \cite{iris}, the only level at which requests are aborted due to crashed nodes is level $0$.
In addition, we also have to take into account that nodes may store outdated information because they were blocked at the round in which new data was written.
Besides this, we have a different definition of when a node belongs to level $\ell$ here (we require at least $5c/6$ active probes instead of $c/2$) and a different value of $\gamma$.

In order to prove Lemma~\ref{lem:probe1} we need to introduce the following definitions:

\begin{definition}[Congested sub-butterfly]
 Let $v$ be a node at level $\ell$ in the butterfly.
 The sub-butterfly $BF(v)$ is called \mydef{congested} at level $\ell$ if the servers in $BF(v)$ receive more than $k^\ell\alpha c/2$ probes for different $d_i$ pieces in total when the requests are processed at level $\ell$.
\end{definition}
\begin{definition}[Congested data item]\label{def:congdataitem}
 A data item $d$ is called \mydef{congested} at level $\ell$ if there exist congested sub-butterflies $BF(\nodeatlvl{i_1}{\ell_1}{d}),\dots,BF(\nodeatlvl{i_r}{\ell_r}{d})$ with $l_i \geq \ell - 1$, $r = c/6$, and $i_1,\dots,i_r$ being pairwise different.
\end{definition}

As a crucial ingredient for the proof of Lemma~\ref{lem:probe1}, we require the hash functions $h_1,\dots,h_c$ to satisfy a certain expansion property, which holds if the hash functions are chosen uniformly and independently at random, \whp.
For this, we need the following definitions.

\begin{definition}[$b$-bundle]
 Given a set $S \subset U$ of keys and a $k \in \mathbb{N}$, we call $F \subseteq S \times \{1,\dots,c\}$ a \mydef{$b$-bundle} of $S$ if every $d \in S$ has exactly $b$ many pairs $(d,i)$ in $F$.
\end{definition}

\begin{definition}[$(b,\sigma)$-expander]
 For any sub-butterfly $B$ let $V(B)$ be the set of servers emulating the nodes of $B$.
 Let $\mathcal{H}$ be a collection of hash functions $h_1, \dots, h_c$.
 Given $h_1, \dots, h_c$ and a level $\ell \in \{0,\dots,\log_k{n}\}$, we define $\serversinvolved{F}{\ell}{S} := \bigcup_{(d,i)\in F}V(BF(\nodeatlvl{i}{\ell}{d}))$.
 Given a $0 < \sigma < 1$, we call $\mathcal{H}$ a \mydef{$(b,\sigma)$-expander} if for any $0 \leq \ell < \log_k{n}$, any $S\subseteq U$ with $|S|\leq \sigma n/k^\ell$, and any $b$-bundle $F$ of $S$, it holds that $|\Gamma_{F,\ell}(S)|\geq k^\ell |S|$.
\end{definition}

The following Claim can be proven analogously to Claim 2.13 of \cite{scheideler2009}.

\begin{claim}\label{claim:bundles}
  If the hash functions $\mathcal{H}=\{h_1,\dots, h_c\}$ are chosen uniformly and independently at random, $m=|U|$ sufficiently large, and $c\geq 18\log m$, then $\mathcal{H}$ is a $(c/6,1/36)$-expander, \whp
\end{claim}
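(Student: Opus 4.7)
The plan is to prove the claim by a direct first-moment (union-bound) argument over the random choice of $h_1,\dots,h_c$, following the template of Claim~2.13 in \cite{scheideler2009}. The central geometric observation is that, because server $s_x$ emulates every butterfly node with coordinate $x$, the sub-butterflies $BF(v)$ for $v$ at level $\ell$ partition the $n$ servers into $n/k^\ell$ disjoint groups of size exactly $k^\ell$. In particular, for a fixed set $S$ with $|S|=s$ and a fixed $(c/6)$-bundle $F$ of $S$, the failure event $|\Gamma_{F,\ell}(S)|<k^\ell s$ holds if and only if the $(c/6)s$ random hash values $\{h_i(d):(d,i)\in F\}$ all land inside some fixed collection of at most $s-1$ level-$\ell$ sub-butterflies. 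By independence and uniformity, for any such fixed collection $T$ of $s-1$ sub-butterflies this event has probability $\bigl((s-1)k^\ell/n\bigr)^{(c/6)s}$.

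I would then union-bound over (i) the choice of $T$, (ii) the bundle $F$ given $S$, (iii) the set $S$, and (iv) the level $\ell$, and apply the standard estimate $\binom{a}{b}\leq(ea/b)^b$ together with the hypothesis $sk^\ell/n\leq 1/36$. For $s\geq 2$ (the case $s=1$ is vacuous since $(s-1)^{(c/6)s}=0$), each per-level summand simplifies to
\[
\left(\frac{e^{2}mn}{s^{2}k^\ell}\cdot(6e)^{c/6}\cdot\left(\frac{sk^\ell}{n}\right)^{c/6}\right)^{s}\;\leq\;\left(\frac{e^{2}mn}{s^{2}k^\ell}\cdot\left(\frac{e}{6}\right)^{c/6}\right)^{s}.
\]

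Since $c\geq 18\log m$ gives $(e/6)^{c/6}\leq m^{-\alpha}$ with the constant $\alpha=3\log_{2}(6/e)>3$, and since $m$ is polynomial in $n$, the bracket is at most $n^{-\Omega(1)}$ for every admissible $s$; summing the resulting geometric series in $s$ and taking the $O(\log n)$ union bound over $\ell\in\{0,\dots,\log_k n-1\}$ shows that $\mathcal{H}$ fails to be a $(c/6,1/36)$-expander with probability at most $1/\mathrm{poly}(n)$. The main technical obstacle is the careful pairing of constants: one must verify that $6e\cdot\sigma<1$, which is precisely where the bundle size $c/6$ and the expansion parameter $\sigma=1/36$ interact to yield the crucial factor $e/6<1$; once this factor is raised to the power $c/6=\Theta(\log m)$, it beats both the $\binom{m}{s}$ term from the union bound over $S$ and the $\binom{c}{c/6}^s$ term from the union bound over bundles.
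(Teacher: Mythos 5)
Your proposal is correct and follows essentially the same route the paper takes: the paper does not give an explicit proof but states that the claim ``can be proven analogously to Claim~2.13 of~\cite{scheideler2009},'' and your argument is precisely that standard first-moment/union-bound expansion calculation. The key observations you make — that level-$\ell$ sub-butterflies partition the servers into $n/k^\ell$ disjoint blocks of size $k^\ell$ (so the failure event reduces to all $(c/6)|S|$ independent uniform hash values landing in at most $|S|-1$ blocks), and that $6e\cdot\sigma = 6e/36 = e/6 < 1$ is exactly what makes the $\binom{c}{c/6}^s$ factor controllable when raised to $c/6=\Theta(\log m)$ — are the intended content of the cited argument.
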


\begin{proofof}{Lemma~\ref{lem:probe1}}
For the proof, we distinguish between level $1$ and all other levels $\ell > 1$.
To simplify, whenever we say that a piece $\datap{d}{i}$ of a data item $d$ is \emph{aborted at level $\ell$}, we mean that the probing for $\datap{d}{i}$ did not successfully pass level $\ell$ (but was answered with a \failmsg{d}{i}{\ell} message).
Recall that there are two reasons for a request for a piece $\datap{d}{i}$ to be aborted: Either due to an excessive congestion (at any node at level $\ell > 0$) or because the server responsible for $\datap{d}{i}$ is crashed or outdated (at level $0$). 

Note that whenever a data item $d$ belongs to level $1$, then more than $3c/6$ of the $c$ pieces of $d$ must have been aborted at level $0$ due to outdated or crashed nodes at level $0$ ($5c/6$ pieces of $d$ successfully passed level $1$ by the definition of when a data item belongs to level $1$ and if at least $c/3$ pieces would have passed level $0$ successfully, the data item would have been answered already).
First of all, Lemma~\ref{lem:lookup_maxcrashedservers} yields that at most $c/6$ of the $c$ pieces of any data item can have been aborted at level $0$ due to outdated nodes.
Thus, for any data item $d$ belonging to level $1$, more than $c/6$ pieces of $d$ must be aborted at level $0$ due to crashed nodes.
We will now bound the maximum number of these data items.
Let $S$ be a maximum set of data items that belong to level $1$.
We will show: $|S| < \gamma n$.
We now construct a set $F$ in the following way: for each $d \in S$, we choose $c/6$ indices $i$ with the property that $\datap{d}{i}$ is aborted at level $0$ due to crashed nodes and add these $(d,i)$ to $F$.
Note that $F$ is a $c/6$-bundle $F$ of $S$.
Since the adversary can block only less than $\gamma n$ servers, the number of servers covered by all $BF(\nodeatlvl{i}{0}{d})$ with $(d,i) \in F$ is less than $\gamma n$.
Since $\serversinvolved{F}{0}{S}$ is exactly the set of these servers, it holds: $|\serversinvolved{F}{0}{S}| < \gamma n$.
On the other hand, we know from Claim~\ref{claim:bundles} that for any $c/6$-bundle $F'$ of $S'$ with $|S'| \leq \gamma n$, $|\serversinvolved{F'}{0}{S'}| \geq |S'|$.
Note that this also implies that for any $c/6$-bundle $F'$ of $S'$ with $|S'| \geq \gamma n$, $|\serversinvolved{F'}{0}{S'}| \geq \gamma n$.
Now, assume for contradiction that $|S| \geq \gamma n$.
This yields $|\serversinvolved{F'}{0}{S'}| \geq \gamma n$.
Since this is a contradiction to what we said before, $|S| < \gamma n$ must hold and thus the number of data items belonging to level $1$ must be bounded by $\gamma n$.

Next, for any level $\ell > 1$, we bound the number of data items belonging to level $\ell$.
First of all, note that the only reason for a piece of data item to be aborted on a level $\ell \geq 1$ is due to congestion at a node at level $\ell$.
Second, note that it can be shown that whenever a $\probe{d}{i}{\timestamp{d}}$ is aborted on level $\ell \geq 1$ due to congestion, then $BF(\nodeatlvl{i}{\ell}{d})$ is congested \whp (see Claim 2.18 of \cite{iris}).
Thus, whenever a data item $d$ is declared to belong to a level $\ell > 1$, then at least $c/6$  $\probe{d}{i}{\timestamp{d}}$ messages have been deactivated at level $\ell - 1$ or higher because of congested sub-butterflies, i.e. $d$ is congested at level $\ell - 1$ (see Def.~\ref{def:congdataitem}).
Thus, if many data items belong to level $\ell$, then many sub-butterflies must be congested at level $\ell -1$.
However, as we will prove, only a constant fraction of the sub-butterflies can be congested at level $\ell -1$, which implies that only a constant fraction of all data items can belong to level $\ell$.

  Fix $1 < \ell \leq \log_k{n}$.
  As mentioned before, we will now bound the number of data items that are congested at level $\ell -1$.
  Let $S$ be a maximum set of data items that are congested at level $\ell -1$.
  We will show: $|S| < \gamma n/k^{\ell-1}$.
  Again, we construct a $c/6$-bundle $F$ of $S$ (adding, for each $d \in S$, $c/6$ indices $i$ to $F$ with the property that $BF(\nodeatlvl{i}{l_i}{d})$ is congested).
  We first show that for $\alpha$ sufficiently large, less than a fraction of $\gamma$ of all butterflies at level $\ell-1$ can be congested.
  Recall that a sub-butterfly on level $\ell-1$ is congested if it receives more than $\alpha c k^{\ell-1}/2$ probes for different $(d,i)$-pairs.
  Let $\delta$ be the maximum fraction of servers the adversary may block.
  Since there are at most $(1-\delta)n$ lookup requests in total, at most $c(1-\delta)n$ probes arrive at level $\ell-1$.
  Thus, at most $c(1-\delta)n/(\alpha c k^{\ell-1}/2) = 2(1-\delta)n/(\alpha k^{\ell-1})$ sub-butterflies can be congested at level $\ell-1$.
  Since there are exactly $n/k^{\ell-1}$ disjoint sub-butterflies at level $\ell-1$, the fraction of congested sub-butterflies at level $\ell-1$ is upper bounded by $2(1-\delta)/\alpha \leq 2/\alpha$.
  Hence, for $\alpha > 2/\gamma$, less than a $\gamma$-fraction of the sub-butterflies on level $\ell-1$ can be congested.
  That is, all of the congested sub-butterflies $BF(\nodeatlvl{i}{l_i}{d})$ with $(d,i) \in F$ together contain less than a $\gamma$-fraction of the sub-butterflies on level $\ell-1$.
  This implies $|\serversinvolved{F}{\ell-1}{S}| < \gamma n$.
  
  On the other hand, from Claim~\ref{claim:bundles}, we can deduce that for any $c/6$-bundle $F$ of $S'$ with $|S'| \geq \gamma n/k^{\ell-1}$, $|\serversinvolved{F}{\ell-1}{S'}| \geq \gamma n$.
  By assuming for contradiction that $|S| \geq \gamma n/k^{\ell-1}$, we can deduce that $|\serversinvolved{F}{\ell-1}{S}| \geq \gamma n$, which is a contradiction in this case, too.
  Thus, $|S| < \gamma n/k^{\ell-1}$.
  
  Therefore, less than $\gamma n / k^{\ell-1}$ are congested at level $\ell-1$.
  For the remaining data items, at least $5c/6$ pieces are not congested.
  Thus, these data items do not belong to level $\ell$.
  This finishes the proof. 
\end{proofof}

\subsubsection{Decoding Stage}\label{sec:lookup-decoding}
The Decoding stage proceeds in $\log_k{n}$ sub-phases. 
In the following, for a server $s$ that holds a lookup request for some data item $d$ that has not been answered before this \subphase, we define $s_i^{(\ell)}(d)$ as the node at level $\ell$ on the unique path of length $\log_k{n}$ from the butterfly node on level $\log_k{n}$ emulated by $s_i(d)$ to the butterfly node on level $0$ responsible for $h_i(d)$.

On a high level view, the Decoding Stage works as follows:
During each \subphase $1 \leq \ell \leq \log_k{n}$, starting with level $1$, we try to recover the data items belonging to level $\ell$.
In order to recover a data item $d$, we need to collect at least $c/3$ pieces of $d$.  
To do so, we randomly choose $5c/6$ requests for pieces of $d$ that were active at level $\ell$ in the Probing Stage and for each of these pieces $d_i$ we determine whether $BF(\nodeatlvl{i}{\ell}{d})$ can be decoded without congestion (as described later).
If $BF(\nodeatlvl{i}{\ell}{d})$ can be decoded without congestion, the decoding is initiated and the result of this is sent back to the origin. (Throughout the whole process, we use the same combining/splitting approach of messages as in the Probing Stage.)
Otherwise, the origin is informed that the according piece of $d$ could not be decoded.
If for a data item $d$ not sufficiently many (i.e., less than $c/3$) pieces could be recovered, the request for $d$ is declared to belong to level $\ell + 1$ and will be considered again in the next \subphase.
Note that requests for non-existing data items may be handled in the Decoding Stage.
However, these can be treated as existing items (with the only difference being that one intact server taking part in the decoding is sufficient to tell that the data item does not exist).

In the following, we describe the operation of any \subphase $\ell$ in more detail.
First of all, each server $s$ that is responsible for a lookup request of a data item $d$ that belongs to level $\ell$ chooses $5c/6$ among the at least $5c/6$ indices of pieces of $d$ that were active at level $\ell$ in the Probing Stage.
For such a piece $d_i$ of $d$ with current timestamp $t$, $s$ sends a $\decodemsg{d}{i}{t}$ message from $\nodeatlvl{i}{\log_k{n}}{d}$ to $v := \nodeatlvl{i}{\ell}{d}$ (which is done by simply routing through the $k$-ary butterfly into the direction of $h_i(d)$ for $\ell$ rounds).
In order to determine whether $BF(v)$ can be decoded without congestion, $v$ first checks whether it is congested, i.e., it received more than $\beta c k$ $\decodemsgnull$ messages for a sufficiently large constant $\beta$ and, if not, then issues a $\decodecheckmsg{d}{i}$ message, which is spread to all nodes in $UT(v)$. 
During this spreading, whenever a further forwarding of all messages received by a node $u$ at a level $\ell-\kappa$, $1\leq \kappa < \ell$, could lead to congestion (i.e., $u$ received more than $\beta c k$ $\decodecheckmsg{d'}{i'}$ messages for distinct $(d',i')$ pairs), $u$ stops the forwarding of all messages and instead spreads a $\congmsg{\cdot}$ message in $BF(u)$.
In addition, it sends a $\failmsgnull$ message to all neighbors at level $\ell-\kappa+1$.
Each node on a level $\ell'$, $\ell-\kappa+1 \leq \ell' < \ell$, that receives such a \failmsgnull message forwards this message to all neighbors at level $\ell'+1$ from which it received a \decodecheckmsgnull mesage.
By this it is ensured that whenever a node in $BF(u)$ is congested each node $v'$ at level $\ell$ with $v'\in BF(u)$ receives a \failmsgnull message after at most $2\ell$ rounds.
Each node $u'$ at level $\ell-\kappa$, $1\leq \kappa < \ell$, that received a $\congmsg{}$ message initiates the same spreading of \congmsg{} messages in $UT(u')$.
If $v$ had not been congested before the spreading and $v$ has not received any $\failmsgnull$ message after $2\ell$ rounds, it knows that any piece of a data item for which $v$ received a $\decodemsgnull$ message can be decoded if not outdated nodes in $BF(v)$ forbid this.
Thus, it initiates the decoding for each of the pieces, which may fail due to outdated nodes.
If the decoding is possible, it recovers all of these pieces within $O(\ell)$ communication rounds with a congestion of at most $\beta c k^2$ per node (using the distributed decoding described in \cite{iris}).
These are then forwarded to the origins of the requests.
If, however the decoding fails, or if $v$ was congested or received a $\failmsgnull$ message, it sends a \failmsgnull message to the origins of the \decodemsgnull messages it received (which, again, are forwarded up to the initiator of that \decodemsgnull message).
Finally, if a server $s$ that is responsible for a lookup request of a data item $d$ receives at least $c/3$ successfully decoded pieces, it determines $d$ and answers the request.
Otherwise, it changes the request to belong to level $\ell + 1$ such that it will be processed again in the next \subphase.

It is easy to see that the Decoding Stage satisfies the following property:
\begin{lemma}
The Decoding Stage takes at most $O(\log{n})$ communication rounds per \subphase with at most
$O(\log^3 n)$ congestion in every node at each round, \whp
\end{lemma}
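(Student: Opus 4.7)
The plan is to prove the lemma in two parts: a runtime bound per \subphase and a per-round congestion bound at each server.

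For the runtime, I would decompose sub-phase $\ell$ into its chronological stages and bound each separately. (i) Every origin routes its $5c/6$ chosen \decodemsgnull messages from its server at level $\log_k n$ down to the target node at level $\ell$, which takes at most $\log_k n$ rounds. (ii) Each uncongested target $v$ spreads \decodecheckmsgnull messages upward through $UT(v)$, using a further $\ell$ rounds. (iii) Propagation of \congmsg{} and \failmsgnull messages through the affected sub-butterflies takes another $O(\ell)$ rounds, exactly as already argued in the protocol description. (iv) The actual distributed decoding inside each $BF(v)$ takes $O(\ell)$ rounds by the IRIS decoding analysis. (v) Finally, decoded pieces (or \failmsgnull messages) are routed back along the original paths in at most $\log_k n$ rounds. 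Each summand is $O(\log_k n) = O(\log n)$, so one \subphase finishes in $O(\log n)$ rounds.

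For the congestion bound I would use that the protocol imposes explicit thresholds at every node: as soon as a node has received more than $\beta c k$ \decodemsgnull or \decodecheckmsgnull messages it stops issuing new ones, so the outgoing congestion of these message types is $O(c k)$ per round, and the analogous argument bounds \congmsg{} and \failmsgnull traffic. The dominating $O(\log^3 n)$ contribution comes from the actual distributed decoding inside a single $BF(v)$: by the IRIS coding strategy, each server participating in the decoding of $BF(v)$ handles $O(ck)$ pieces along butterfly edges of degree $O(k)$, giving at most $O(c k^2)$ messages per round during the decoding. With $c = \Theta(\log n)$ and $k = O(\log n)$ this evaluates to $O(\log^3 n)$. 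The remaining task is to bound the \emph{incoming} \decodemsgnull congestion at each level, which I would handle exactly as in the Probing Stage: by Lemma~\ref{lem:probe1} at most $\gamma n/k^{\ell-1}$ data items belong to level $\ell$, and by the $(c/6,1/36)$-expander property of the random hash functions (Claim~\ref{claim:bundles}), their $(5c/6)$ probes each are spread sufficiently evenly across sub-butterflies at level $\ell$; the standard splitting-and-combining argument then yields an incoming decode congestion of $O(c k)$ per node per round, \whp, which is absorbed by the $O(\log^3 n)$ bound above.

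The main obstacle is controlling all of these congestion contributions simultaneously at every server and in every round. This is made manageable by the two-stage safeguard built into the protocol: the \decodecheckmsgnull / \congmsg{} mechanism ensures that the bandwidth-heavy decoding is only triggered inside sub-butterflies that are deterministically certified to be below threshold, so the only place that truly requires a probabilistic argument is the arrival pattern of \decodemsgnull messages at level $\ell$, where Claim~\ref{claim:bundles} applied to the bound of Lemma~\ref{lem:probe1} gives exactly the required high-probability guarantee, and a union bound over the $O(\log n)$ sub-phases and $n$ servers preserves the \whp statement.
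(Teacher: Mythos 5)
The paper does not actually supply a proof of this lemma --- it asserts ``It is easy to see that the Decoding Stage satisfies the following property'' and leaves it at that --- so there is no in-text argument to compare yours against. Your decomposition of a sub-phase into (i) routing \decodemsgnull messages to level $\ell$, (ii) the \decodecheckmsgnull sweep through $UT(v)$, (iii) propagation of \congmsg{} and \failmsgnull messages, (iv) the distributed decoding, and (v) returning results, each in $O(\log_k n)$ rounds, is a sound way to fill that gap, and you correctly identify that the $O(\log^3 n)$ congestion bound is just $O(\beta c k^2)$ with $c = \Theta(\log n)$ and $k = O(\log n)$, which is exactly the figure the paper states in-line for the decoding step (and which also recurs in the \decodecheckmsgnull sweep, where each node can receive up to $\beta c k$ distinct pairs from each of its $k$ upward neighbors before halting).

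Two small inaccuracies worth fixing. First, when bounding the number of data items whose requests belong to level $\ell$ at the start of a sub-phase you cite Lemma~\ref{lem:probe1} (a Probing Stage statement); the correct reference for the Decoding Stage is Lemma~\ref{lemma:decoding}, which gives the bound $\varphi n/k^\ell$ used to control the total number of \decodemsgnull messages. Second, the $(c/6,1/36)$-expander property (Claim~\ref{claim:bundles}) is a coverage/expansion statement used to bound how many data items can be blocked or congested; it is not an anti-concentration tool and does not by itself bound the per-node arrival rate of \decodemsgnull messages. What actually gives a w.h.p.\ per-node congestion bound during the routing to level $\ell$ is a Chernoff-type argument exploiting that each path endpoint is a freshly chosen uniformly random hash position, combined with the splitting-and-combining rule (each butterfly node forwards each distinct $(d,i)$ once). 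Neither issue affects the final asymptotic bounds, but the invocation of the expander claim for congestion is a misattribution of which lemma is doing the work.
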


Similarly to Lemma~\ref{lem:probe1} of the Probing Stage, for the Decoding Stage the following lemma holds:

\begin{lemma}\label{lemma:decoding}
At the beginning of each \subphase $\ell \in \{1, \dots, \log_k{n}\}$, the number of data items with requests belonging to level
$\ell$ is at most $\varphi n/k^{\ell}$ with $\varphi =\Theta(k)$.
\end{lemma}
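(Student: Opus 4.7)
The plan is to prove Lemma~\ref{lemma:decoding} by induction on $\ell$, mirroring the structure of the proof of Lemma~\ref{lem:probe1} but adapted to the two sources of failure in the Decoding Stage (congestion and outdatedness). The base case $\ell=1$ is immediate from Lemma~\ref{lem:probe1}: the number of requests belonging to level $1$ at the end of the Probing Stage is at most $\gamma n = (\gamma k)\,n/k$, so any $\varphi \geq \gamma k$ works.

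For the inductive step, assuming $N_\ell \leq \varphi n/k^\ell$, I would bound $N_{\ell+1}$ by splitting into two contributions: (A) requests already classified to belong to level $\ell+1$ by the Probing Stage, which by Lemma~\ref{lem:probe1} contribute at most $\gamma n/k^\ell$; and (B) requests at level $\ell$ that get pushed to level $\ell+1$ during sub-phase $\ell$, i.e.\ requests for which fewer than $c/3$ of the $5c/6$ sampled pieces decoded successfully, so that more than $c/2$ failed.

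For (B), I would classify each failed piece $d_i$ by its cause: (i) the node $v=\nodeatlvl{i}{\ell}{d}$ received more than $\beta c k$ \decodemsgnull messages and is congested at its top; (ii) a \failmsgnull message travelled down into $v$ from a congested ancestor during the \decodecheckmsgnull spreading in $UT(v)$; or (iii) the decoding in $BF(v)$ aborted because too many servers in $BF(v)$ were outdated. For (i), the inductive hypothesis bounds the number of \decodemsgnull messages at level $\ell$ by $(5c/6)\,\varphi n/k^\ell$, so the number of top-congested sub-butterflies at level $\ell$ is at most $\tfrac{5\varphi n}{6\beta k^{\ell+1}}$, which is a fraction of at most $\tfrac{5\varphi}{6\beta k} \leq \gamma/3$ of the $n/k^\ell$ disjoint sub-butterflies for $\beta$ sufficiently large. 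For (ii), I would do an analogous counting at each level $\ell-\kappa$ ($1 \leq \kappa < \ell$), using that each node-level congestion absorbs at least $\beta c k$ distinct spreading messages and that the total number of \decodecheckmsgnull messages is dominated by $(5c/6)N_\ell$, then sum the congestion-fraction contributions to their ancestor sub-butterflies at level $\ell$. For (iii), I would adapt the counting of Lemma~\ref{lem:lookup_maxcrashedservers} to the set of servers that are either currently crashed or outdated for bucket $\fbucket{z}{d}$; since outdated servers are exactly those crashed during the last period in which $\fbucket{z}{d}$ was rewritten, their number is also at most $\gamma n^{1/\log\log n}$, so combining with the currently crashed set yields at most $2\gamma n^{1/\log\log n}$ bad servers and hence the same sub-butterfly-blocking bound up to constants. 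By a Chernoff argument, \whp at most $c/6$ pieces of any $d$ fall into outdated sub-butterflies.

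Finally, I would tie everything together with the expander of Claim~\ref{claim:bundles}. Each $d \in S$ (the set from source (B)) has more than $c/2 \geq c/6$ sampled pieces failing, so I can extract a $c/6$-bundle $F$ of $S$ in which every $(d,i)\in F$ has $BF(\nodeatlvl{i}{\ell}{d})$ either congested or outdated. Tuning the constants so that the union of bad sub-butterflies at level $\ell$ covers less than $\gamma n$ servers forces $|\Gamma_{F,\ell}(S)| < \gamma n$; by the $(c/6,\gamma)$-expansion this gives $|S|<\gamma n/k^\ell$. Adding (A) and (B) yields $N_{\ell+1} < 2\gamma n/k^\ell = (2\gamma k)\,n/k^{\ell+1}$, so $\varphi := 2\gamma k = \Theta(k)$ closes the induction. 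The hardest step will be (ii): carefully showing that the induced congestion during the upward spreading of \decodecheckmsgnull messages translates into a small fraction of ancestor sub-butterflies at every intermediate level while avoiding double counting, since unlike Lemma~\ref{lem:probe1} congestion here can cascade through multiple levels of $UT(v)$ before reaching the root of $BF(v)$.
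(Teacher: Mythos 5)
Your overall strategy mirrors the paper's: induction on $\ell$ with base case from Lemma~\ref{lem:probe1}, decomposition of level-$(\ell+1)$ items into those declared by the Probing Stage plus those pushed out of sub-phase $\ell$, classification of failure causes, and Claim~\ref{claim:bundles} to bound each class. However, two points in your inductive step do not hold as stated.

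First, your plan to bound contribution (B) by a \emph{single} mixed-cause $c/6$-bundle and conclude $|S| < \gamma n/k^\ell$ requires that the union of \emph{all} bad sub-butterflies of level $\ell$ (congested, blocked by current crashes, and blocked by outdatedness) covers fewer than $\gamma n$ servers. But the individual arguments each give a bound of roughly $\gamma n$ servers covered, so the union is only bounded by $2\gamma n$ or $3\gamma n$, and the $(c/6,1/36)$-expander of Claim~\ref{claim:bundles} then yields no contradiction. Your ``tuning the constants'' remark points in the right direction, but it would propagate into Claim~\ref{claim:bundles} and the adversary budget. The paper instead bounds the \emph{blocked} items by $\gamma n/k^\ell$ and, \emph{separately}, the \emph{congested} items by $\gamma n/k^\ell$, each with its own bundle and its own invocation of Claim~\ref{claim:bundles}, so that (B) is at most $2\gamma n/k^\ell$; together with the Probing Stage's $\gamma n/k^\ell$ this gives $3\gamma n/k^\ell = \varphi n/k^{\ell+1}$ and $\varphi = 3\gamma k$, not $2\gamma k$ as you propose. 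Second, and more importantly, your treatment of (iii) conflates two incompatible arguments. Lumping ``currently crashed or outdated'' into one set and invoking a Chernoff argument over the hash choices is not sound for the currently crashed servers: the adversary chooses the \emph{current} crashed set after seeing the hash functions, so Chernoff gives nothing there, and only the adversary-uniform expander argument applies. Chernoff (via Lemma~\ref{lem:lookup_maxcrashedservers}) is valid only for the \emph{outdated} servers, since the hash functions were chosen after those crashes were fixed. The paper exploits exactly this asymmetry: Claim~\ref{claim:outdated} shows that for any $d$ that is neither blocked nor congested, at most $c/6$ pieces are killed by outdatedness, so an outdated-only failure is impossible and outdatedness never forms a separately-counted class. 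With the sampled $5c/6$ pieces and the budget $c/6 + c/6 + c/6 = c/2$, at least $c/3$ pieces survive, so outdatedness only consumes slack; it never appears in a bundle. If you fold outdated servers into a blocking set and then also apply the $c/6$-per-item Chernoff cap, you are double-counting.

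Finally, the cascade-congestion issue you flag as the hardest step (your (ii)) is handled in the paper not by a level-by-level counting of ancestor congestion, but implicitly through the definitions: a sub-butterfly $BF(v)$ at level $\ell$ is congested if the servers in $BF(v)$ collectively receive more than $\beta c k$ distinct $(d,i)$-decode messages, so any node $u \in UT(v)$ that aborts due to receiving too many \decodecheckmsgnull messages witnesses that $BF(v) \supseteq BF(u)$ has already absorbed that many distinct pairs. This makes a separate per-ancestor-level count unnecessary and avoids the double-counting you worry about.
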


For the proof of Lemma~\ref{lemma:decoding} we need the following definitions:

\begin{definition}[Blocked sub-butterfly]\label{def:crashed-butterfly}
 Let $v$ be a node at level $\ell$ in the butterfly.
 The sub-butterfly $BF(v)$ is called \mydef{blocked} at level $\ell$ if at least $2^{\ell-1}$ servers from $BF(v)$ are crashed. 
\end{definition}
\begin{definition}[Congested sub-butterfly]
 Let $v$ be a node at level $\ell$ in the butterfly.
 The sub-butterfly $BF(v)$ is called \mydef{congested} at level $\ell$ if the servers in $BF(v)$ receive more than $\beta c k$ requests for different $d_i$ pieces in total when the requests are processed at level $\ell$.
\end{definition}
\begin{definition}[Blocked/Congested data item]\label{def:blocked_congested_data_item}
 A data item $d$ is called \\
 \mydef{blocked}/\mydef{congested} at level $\ell$ if there exist blocked/congested sub-butterflies \\
 $BF(\nodeatlvl{i_1}{\ell_1}{d}),\dots,BF(\nodeatlvl{i_r}{\ell_r}{d})$ with $l_i \geq \ell$, $r = c/6$, and $i_1,\dots,i_r$ being pairwise different.
\end{definition}

Furthermore, we need the following claims.
\begin{claim}\label{claim:outdated}
  For any data item $d$ which is neither blocked nor congested, at most $c/6$ pieces of $d$ can fail due to outdated servers.
\end{claim}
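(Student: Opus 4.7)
My plan is to reduce Claim~\ref{claim:outdated} to an analogue of Lemma~\ref{lem:lookup_maxcrashedservers} in which the role of crashed servers is taken over by the servers that are outdated with respect to the bucket $\fbucket{z}{d}$ currently under consideration. The starting observation is that, by the Write Protocol, every intact server rewrites its data block and its timestamp for $\fbucket{z}{d}$ during each (re)encoding of this bucket; hence a server can hold outdated information for $\fbucket{z}{d}$ only if it was crashed during the period in which $\fbucket{z}{d}$ was most recently encoded. Since the batch-based adversary may crash at most $\gamma\cdot n^{1/\log\log n} = \gamma\cdot 2^{\log_k n}$ servers per period, the set of servers that are outdated for $\fbucket{z}{d}$ obeys the same bound.

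The second step is to establish that the current hash functions $h_1,\dots,h_c$ for $\fbucket{z}{d}$ are independent of this outdated set. These hash functions were sampled uniformly and independently at random during the previous (re)encoding of $\fbucket{z}{d}$, and in the batch-based model the crashed set of that earlier period was committed before those hash functions were drawn. Consequently, the random mapping of the pieces of $d$ to servers is taken over a distribution that is independent of the set of outdated servers, exactly as the hash functions in Lemma~\ref{lem:lookup_maxcrashedservers} are independent of the crashed set. With this independence in hand, I can reuse the counting argument from the proof of Lemma~\ref{lem:lookup_maxcrashedservers} almost verbatim: the fraction of sub-butterflies at any level $\ell$ that contain at least $\lceil 2^{\ell-1}\rceil$ outdated servers is at most $\gamma$, and a Chernoff bound then yields that at most $c/6$ pieces of $d$ are mapped to such ``outdated-blocked'' sub-butterflies, \whp

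The third step is to connect this structural statement back to the notion of failure due to outdated servers. Under the hypothesis of the claim, $d$ is neither blocked nor congested at the current level $\ell$, so the contributions of crashed servers and of congestion to the failure count for $d$ are already excluded by the respective definitions. Any remaining failure of the decoding of $d_i$ in $BF(s_i^{(\ell)}(d))$ must therefore be attributable to outdated servers inside $BF(s_i^{(\ell)}(d))$, and by the coding guarantee of Section~\ref{sec:bucket-storage-strategy} this can only occur when $BF(s_i^{(\ell)}(d))$ contains enough outdated servers to qualify as outdated-blocked at level $\ell$. The bound from the previous paragraph then caps the number of such pieces by $c/6$, as required.

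The hard part will be the independence argument in the second step: one has to argue carefully across periods that the adversary's adaptive knowledge does not couple the outdated set to the current hash functions. The key is that the hash functions stored for $\fbucket{z}{d}$ were drawn strictly after the earlier period's crashed set was committed, so their randomness is genuinely ``fresh'' with respect to what is outdated in the current period; once this is pinned down, the Chernoff-style argument of Lemma~\ref{lem:lookup_maxcrashedservers} transfers without further modification.
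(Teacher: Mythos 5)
Your proposal follows essentially the same route as the paper's proof, and the structure is sound: treat outdated servers as the crashed servers of the earlier period in which the bucket was last re-encoded, note that the hash functions were drawn after that period's crash set was committed, and transfer the Lemma~\ref{lem:lookup_maxcrashedservers} counting/Chernoff argument to bound the number of pieces landing in ``outdated-heavy'' sub-butterflies by $c/6$. The paper compresses your second step into a bare citation of Lemma~\ref{lem:lookup_maxcrashedservers} (it implicitly relies on exactly the reinterpretation you spell out), so your version is, if anything, more explicit about why that lemma transfers to outdated servers.

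The one place you are looser than the paper is the third step. You write that for a non-blocked, non-congested $d$ the ``contributions of crashed servers and of congestion to the failure count are already excluded,'' so any remaining failure must be ``attributable to outdated servers.'' That phrasing skips the fact that crashed and outdated servers act \emph{jointly} on a sub-butterfly: the decoding guarantee (Claim~2.17 of \cite{iris}) tolerates fewer than $2^{\ell}$ bad servers in $BF(\nodeatlvl{i}{\ell}{d})$, counting crashed and outdated together. The point of the thresholds is that ``not blocked'' caps the crashed count at $< 2^{\ell-1}$ per relevant sub-butterfly and ``not outdated-blocked'' caps the outdated count at $< 2^{\ell-1}$, so the sum stays under $2^{\ell}$ and the piece is recoverable. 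The paper states this $2^{\ell-1}+2^{\ell-1}<2^{\ell}$ arithmetic explicitly; without it, your claim that the failure is ``attributable to outdated servers'' and forces the sub-butterfly to be outdated-blocked is not quite justified. Adding that one line closes the gap and makes your argument coincide with the paper's.
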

\begin{proof}
 Assume a data item $d$ is neither congested nor blocked.
 This means that less than $c/6$ pieces of $d$ are congested and less than $c/6$ pieces of $d$ are blocked.
 The latter implies that the adversary blocks less than $2^{\ell-1}$ servers from the sub-BFs $BF(\nodeatlvl{i_1}{l_1}{d}), BF(\nodeatlvl{i_1}{l_1}{d}), \dots$ for the remaining pieces $\datap{d}{i_1}, \datap{d}{i_2}, \dots$ of $d$.
 By Lemma~\ref{lem:lookup_maxcrashedservers}, for all but $c/6$ of these pieces, less than $2^{\ell-1}$ of the servers in $BF(\nodeatlvl{i_1}{l_1}{d}), BF(\nodeatlvl{i_1}{l_1}{d}), \dots$ can be outdated regarding $d$.
 Thus, for all but $c/6$ of these pieces, less than $2^{\ell-1}+2^{\ell-1} = 2^{\ell}$ servers can be crashed or outdated, which, by Claim~2.17 of \cite{iris} means that these pieces can be recovered.
 Thus, at most $c/6$ pieces of the data items that are neither blocked nor congested can fail due to outdated servers.
 \end{proof}

Now we are ready to prove Lemma~\ref{lemma:decoding}. The proof is similar to the proof of Lemma~2.21 of \cite{iris} with the main difference being that we additionally need to handle outdated data items here.

\begin{proofof}{Lemma~\ref{lemma:decoding}}
  In the following, let $\gamma = 1/36$ and $\varphi = 3\gamma k$.
 We prove the lemma by induction on $\ell$.
 The basis ($\ell=1$) holds by Lemma~\ref{lem:probe1}.
 For the induction step, let $\ell \in \{1,\dots,\log_k{n} - 1\}$ and assume that the induction hypothesis holds for level $\ell$.
 We show that the number of data items that will be propagated to level $\ell+1$ during \subphase $\ell$ is at most $2\gamma n/k^{\ell}$.
 Together with Lemma~\ref{lem:probe1}, this means that at the beginning of \subphase $\ell+1$, at most $\gamma n/k^{\ell} + 2\gamma n/k^{\ell}$ data items belong to level $\ell+1$, which is equal to $\varphi n/k^{\ell+1}$ and thus proves the induction step.
 
 Recall that in \subphase $\ell$, requests for $5c/6$ pieces of each data item belonging to level $\ell$ are sent.
 Note that any request for a piece $\datap{d}{i}$ of a data item $d$ in \subphase $\ell$ of the Decoding Stage can only be aborted for one of the following three reasons: First, that too many servers storing information about $\datap{d}{i}$ are crashed in the current period.
 Second, that too many servers storing information about $\datap{d}{i}$ are outdated (i.e., they were crashed when the bucket storing $d$ was last updated).
 Third, due to congestion in \subphase $\ell$ of the Decoding Stage.
 However, it can be shown that if at least $c/6$ requests for a data item $d$ are aborted during the decoding in \subphase $\ell$ due to too many crashed nodes, then $d$ is blocked at level $\ell$ and if at least $c/6$ requests for a data item $d$ are aborted during the decoding in \subphase $\ell$ due to congestion, then $d$ is congested at level $\ell$ \whp
 The former claim is an implication of Claim~2.17 of \cite{iris}, and the latter follows by definition and the algorithm performed in the decoding stage.
Claim~\ref{claim:outdated} now implies that for the data items which are neither blocked nor congested, at least $5c/6 - c/6 - c/6 - c/6 = c/3$ pieces can be recovered correctly, which means that they can be answered after \subphase $\ell$.
Thus, in the following, we will show that at most $\gamma n/k^{\ell}$ data items are blocked at level $\ell$ and that at most $\gamma n/k^{\ell}$ data items are congested at level $\ell$.

  %Blocked-Teil
  First of all, we prove that the number of blocked data items in \subphase $\ell$ is upper bounded by $\gamma n/k^\ell$.
  Let $S$ be a maximum set of data items that are blocked at level $\ell$.
  We will show: $|S| < \gamma n/k^\ell$.
  Recall that a data item $d$ is blocked at level $\ell$ if there exist at least $r = c/6$ sub-butterflies $BF(\nodeatlvl{i_1}{\ell_1}{d}),\dots,BF(\nodeatlvl{i_r}{\ell_r}{d})$ with $\ell_i \geq \ell$, and $i_1,\dots,i_r$ being pairwise different that are blocked, i.e., each of them contains at least $2^{\ell_i-1}$ crashed servers.
  For each $d \in S$, let $d_{i_1},\dots,d_{i_r}$ be $c/6$ such indices fulfilling this property.
  Further, let $(d,{d_{i_1}}), \dots, (d,{d_{i_r}}) \in F$ for all $d \in S$.
  Then, $F$ is a $c/6$-bundle of $S$.
  Since a sub-butterfly of level $\ell'$ contains $k^{\ell'}$ servers in total, and since a blocked sub-butterfly of level $\ell'$ contains at least $2^{\ell'-1}$ crashed nodes, a $2^{\ell'-1}/k^{\ell'}$ fraction of the servers of a blocked sub-butterfly of level $\ell'$ are crashed, which is at least $2^{\log_k{n}-1}/n$ for any $1 \leq \ell' \leq \log_k{n}$.
  Therefore, if the adversary can only block less than $(\gamma/2)\cdot 2^{\log_k{n}}$ servers, then the number of servers covered by all $BF(\nodeatlvl{i}{\ell_i}{d})$ with $(d,i) \in F$ must be less than $\gamma n$.
  Since $\serversinvolved{F}{\ell}{S}$ is exactly the set of these servers, it holds: $|\serversinvolved{F}{\ell}{S}| < \gamma n$.
  
  On the other hand, we know from Claim~\ref{claim:bundles} that for any $c/6$-bundle $F'$ of $S'$ with $|S'| \leq (1/36) n/k^\ell$, $|\serversinvolved{F'}{\ell}{S'}| \geq |S'| k^\ell$.
  %Clearly, this implies that for any $c/6$-bundle $F$ of $S'$ with $|S'| \geq \sigma n/k^\ell$, $|\serversinvolved{F}{\ell}{S'}| \geq \sigma n$.
  Since $\gamma = 1/36$, this implies that for any $c/6$-bundle $F'$ of $S'$ with $|S'| \geq \gamma n/k^\ell$, $|\serversinvolved{F'}{\ell}{S'}| \geq \gamma n$.
  Now, assume for contradiction that $|S| \geq \gamma n/k^\ell$.
  This yields $|\serversinvolved{F}{\ell}{S}| \geq \gamma n$, which is a contradiction to what we said before.
  Hence, the number of blocked data items at level $\ell$ is less than $\gamma n/k^\ell$.
   
 %Congestion-Teil
  For the upper bound on the number of congested data items, recall that we denote a sub-butterfly $BF(v)$ of a node $v$ as congested if the servers in $BF(v)$ receive more than $\beta c k$ decode messages for different $(d,i)$-pairs.
  For $\beta := 3$, it holds that $\beta c k > 5 \varphi c / (6\gamma)$, which implies that a congested sub-butterfly $BF(v)$ of a node $v$ receives more than $5 \varphi c / (6\gamma)$ $\decodemsg{d}{i}{t}$ messages for different $d$ and $i$.
  By the induction hypothesis and due to the fact that we send $5c/6$ $\decodemsgnull$ messages per data item, there are at most $5c/6 \cdot \varphi n / k^\ell$ $\decodemsgnull$ messages in total, which means that there are less than $\varphi n/k^\ell \cdot 5c/6 \cdot 6\gamma/(5c\varphi) = \gamma n/k^\ell$ congested sub-butterflies of dimension $\ell$.
  
  Let $S$ be a set of data items congested at level $\ell$.
  Similar to the previous part about blocked data items, we can construct a $c/6$-bundle $F$ for $S$.
  Since there are less than $\gamma n/k^\ell$ congested sub-butterflies of dimension $\ell$ and since each sub-butterfly of dimension $\ell$ contains $k^\ell$ nodes, $|\serversinvolved{F}{\ell}{S}| < \gamma n$. 
  On the other hand, if we assume $|S| \geq \gamma n/k^\ell$, Claim~\ref{claim:bundles} yields $|\serversinvolved{F}{\ell}{S}| \geq \gamma n$.
  Since this is a contradiction, we have that the nummber of data items congested at level $\ell$ is less than $\gamma n/k^\ell$.
  
  As stated at the beginning of the proof, this is sufficient to prove the induction step and thus completes the proof of the lemma.
\end{proofof}

The previous lemmas and results imply Corollary~\ref{corollary:main}, which proves Theorem~\ref{theorem:main}.

\begin{corollary}\label{corollary:main}
RoBuSt correctly serves any set of lookup and write requests (with one request per intact server) in at most $O(\log^4n)$ communications rounds, with a congestion of at most $O(\log^3n)$ at every server in each round and a redundancy of $O(\log n)$ if less than $1/72\cdot n^{1/\log\log n}$ servers are crashed.
\end{corollary}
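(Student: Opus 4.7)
The plan is to derive Corollary~\ref{corollary:main} by composing the bounds and structural guarantees established for the individual stages of the Write and Lookup protocols. The redundancy claim $O(\log n)$ is already established by Corollary~\ref{corollary:storage-overhead}, and the failure-tolerance threshold $\gamma n^{1/\log\log n}$ with $\gamma = 1/72$ matches the assumption used in Section~\ref{sec:write-protocol} and in Lemmas~\ref{lem:lookup_maxcrashedservers}--\ref{lemma:decoding}. Hence the genuine work is to sum up the round and congestion bounds of the stages and to argue overall correctness using those lemmas together with the invariants maintained by the Write Protocol.

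For the time and congestion accounting I would proceed stage by stage. The Preprocessing Stage takes $(2+o(1))\log n$ rounds with congestion $O(\log n)$, inherited from IRIS. The Writing Stage runs at most $\Lambda + 1 = O(\log n)$ phases, each implementable in $O(\log n)$ rounds with $O(\log n)$ congestion, so writes contribute at most $O(\log^2 n)$ rounds. The Zone Examination Stage has up to $\Lambda + 1 = O(\log n)$ phases, one per zone; within each, the Probing Stage runs for $O(\log n)$ rounds with congestion $O(\log^2 n)$, while the Decoding Stage proceeds in $\log_k n$ sub-phases of $O(\log n)$ rounds each with congestion $O(\log^3 n)$. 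Multiplying through and summing over all zones then yields at most $O(\log^4 n)$ rounds overall, with worst-case congestion $O(\log^3 n)$ per server per round, which matches the bounds claimed in the corollary.

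For correctness I would first invoke the invariants of the Write Protocol: each bucket holds either $0$ or between $n$ and $2n$ data items, and for every data item $d$ the highest bucket containing a version of $d$ holds the most recent one. Consequently, the zone examination, proceeding from zone $0$ upward, encounters the correct bucket for each lookup request before any stale copy. Within a single zone, Lemma~\ref{lem:probe1} bounds by $\gamma n/k^{\ell-1}$ the number of requests that the Probing Stage declares to belong to level $\ell$, and the inductive bound of Lemma~\ref{lemma:decoding} then shows that the number of requests still belonging to level $\ell$ at the start of the $\ell$-th sub-phase of the Decoding Stage is at most $\varphi n / k^\ell$. At $\ell = \log_k n$ this leaves only $\varphi = \Theta(k)$ residual requests per zone, and these are served by the top-level decoding that spans the entire butterfly and tolerates the bounded numbers of crashed and outdated servers guaranteed by Lemma~\ref{lem:lookup_maxcrashedservers} and Claim~\ref{claim:outdated}.

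The hardest part will be ensuring that the outdated-data phenomenon, specific to RoBuSt and absent from IRIS, does not invalidate the correctness argument. Because a server may hold stale encoded pieces for a bucket from a previous period in which it was crashed during a write, the Probing and Decoding Stages must use the bucket timestamps to distinguish current from obsolete information; the acquisition of the current hash functions and timestamp at the start of the Probing Stage is precisely what makes this possible, \whp The joint application of Lemma~\ref{lem:lookup_maxcrashedservers} (bounding pieces routed into heavily crashed sub-butterflies) and Claim~\ref{claim:outdated} (bounding pieces lost to staleness) is what allows the induction of Lemma~\ref{lemma:decoding} to carry through, and verifying that the $\Theta(k)$ residual requests at the final sub-phase really do succeed under the combined effect of crashes and outdatedness is the most delicate step in the full argument.
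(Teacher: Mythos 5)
Your proposal is correct and tracks the paper's own approach. The paper offers no explicit proof of Corollary~\ref{corollary:main} beyond stating that ``the previous lemmas and results imply'' it, so the genuine content of any proof here is exactly the assembly of bounds you carry out: Corollary~\ref{corollary:storage-overhead} for redundancy, the stage-by-stage round and congestion counts, the failure threshold $(\gamma/2)\cdot 2^{\log_k n}$ with $\gamma = 1/36$ matching $(1/72)n^{1/\log\log n}$, and correctness via the invariants of the Write Protocol together with Lemmas~\ref{lem:lookup_maxcrashedservers}, \ref{lem:probe1}, \ref{lemma:decoding} and Claim~\ref{claim:outdated}. You also correctly flag the most delicate point, namely that the handling of outdated information (absent from IRIS) must not invalidate the induction, and that the $\Theta(k)$ residual requests at the final sub-phase of each zone must be shown to succeed because the whole butterfly is neither blocked nor congested under the stated threshold.

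One small observation: your own accounting actually yields a \emph{tighter} round bound than the one you (and the paper) state. With $O(\log n)$ zones, each costing $O(\log n)$ rounds for probing plus $\log_k n \cdot O(\log n) = O(\log^2 n)$ rounds for decoding, the Zone Examination Stage is $O(\log^3 n)$, and adding the $O(\log^2 n)$ Writing Stage and $O(\log n)$ Preprocessing gives $O(\log^3 n)$ total, not $O(\log^4 n)$. This is not an error on your part since $O(\log^4 n)$ remains a valid upper bound and matches the paper's conservative claim, but it is worth noting that ``multiplying through'' the factors you list does not produce a fourth $\log n$; the corollary's bound appears to be stated loosely.
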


\section{Conclusion and Future Work}
We presented the first scalable distributed storage system that is provably robust against batch-based crash failures with up to $\gamma n^{1/\log\log n}$ crashes allowed ($\gamma>0$ constant).
An interesting question that has not been investigated in this work is whether the techniques that enabled the Enhanced IRIS system \cite{iris} to tolerate a larger number of failed servers could be adapted for RoBuSt in order to increase the number of crashed servers allowed up to $\varrho n$ (for some constant $\varrho>0$) while (as a minor drawback) also increasing the redundancy to $O(\log n)$, such as it is the case in Enhanced IRIS.

Moreover, while we assume batch-based failures, it would be interesting to see whether a scalable distributed storage system can be designed that can tolerate failures occuring at arbitrary points in time.
Dealing with a similar issue, it would also be interesting to enhance our system to allow dynamics (i.e. joins and leaves of servers) in our system in order to model P2P networks.

A further interesting challenge is to enhance our distributed storage system such that additional types of attacks can be handled, for example Byzantine attacks.

\bibliographystyle{plain}
\bibliography{RoBuSt}

\section*{Appendix}

\subsection{Details on Decoding}\label{sec:decoding}
At the beginning of each phase of the Writing Stage, for each data item $d$ belonging to the current bucket $B_\curphase$, all pieces of $d$ are decoded and sent to the server maintaining $d$.
This can be done by a bottom-up approach that proceeds in $\log_k{n} + 1$ rounds:
First of all, note that for each bucket, each server stores the timestamp of when it last took part in a Writing Stage for this bucket.
These timestamps enable a server to identify that it stores outdated information about a bucket.

In round $r \in \{0, \dots, \log_k{n} - 1\}$, each node $u$ at level $\log_k{n} - r$ in the $k$-ary butterfly forwards all decoding information about a node $v$ at level $\log_k{n} - r - 1$ to $v$.
In any round $r' \in \{1, \dots, \log_k{n}\}$ any crashed node $v$ at level $\log_k{n} - r'$ filters out all messages whose timestamp is not a highest among those received and, if the number of remaining messages is at least $k-1$, it can use these messages to decode enough information to function as an intact node from now on.
Note that if $v$ received a timestamp higher than its own one, it behaves like a crashed one from now on.

Any node $w$ at level $0$ that is still crashed after round $\log_k{n} - 1$ can restore the pieces of data items in $w$ by the messages it receives from the nodes at level $1$.
This is due to the following: First of all, the adversary can block less than $\frac{1}{2}\cdot 2^{\log_k{n}}$ servers in the current period only. Secondly, less than $\frac{1}{2}\cdot 2^{\log_k{n}}$ servers can store outdated information about bucket $B_\curphase$, for the same reason.
Thus, less than $2^{\log_k{n}}$ servers can have none or outdated pieces of data items.
Then, from Claim~2.17 of \cite{iris} it follows that all pieces can be recovered at level $0$.

Note that the $c$ hash functions for the pieces of a data items were been chosen uniformly and independently at random when $B_\curphase$ was encoded, and after the adversary had decided on the set of blocked servers.
Thus, each server holds $O(\log{n})$ pieces for bucket $B_\curphase$, \whp
Furthermore, each server maintains at most one data item, \whp
This implies that the above process yields a congestion of at most $O(\log{n})$ \whp

All in all, we have:
\begin{lemma}
 After $\log_k{n}+1$ rounds with a congestion of $O(\log n)$ at each server in each round \whp, each server $\rep{i}$ maintaining a data item $d$ in $B_\curphase$ completely knows $d$, \whp
\end{lemma}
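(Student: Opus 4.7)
The plan is to establish three things separately: the round count, the correctness of the bottom-up reconstruction, and the congestion bound. Throughout, the fact that the coding strategy is the butterfly-based one of \cite{iris} means that each intermediate-level node holds only parity information, so a node whose block is missing or stale can regenerate it from $k-1$ fresh inputs.

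For the round count, I would just trace the algorithm described above the lemma: in round $r \in \{0,\ldots,\log_k n - 1\}$ every node at level $\log_k n - r$ forwards its decoding information one level downwards, so after $\log_k n$ rounds the sweep has reached level $0$; one additional round suffices for the level-$0$ nodes to forward the recovered pieces to the server maintaining the corresponding data item, totalling $\log_k n + 1$ rounds.

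For correctness, the main point is that timestamp filtering correctly separates fresh from stale messages. I would argue inductively over levels, from $\log_k n$ downwards. At each level $\ell$, a node $v$ is called \emph{usable} if it is intact in the current period and its stored timestamp for $B_\curphase$ matches the one used at the last write to $B_\curphase$; a node is non-usable iff it is currently crashed or was crashed during the last write. Since the adversary may crash fewer than $\tfrac12\cdot 2^{\log_k n}$ servers per period and there was only one most recent write period, the number of non-usable servers is strictly less than $2^{\log_k n}$. Under the stepping rule stated above the lemma, a crashed node at level $\log_k n - r$ that receives at least $k-1$ fresh messages reconstructs its block and then behaves as a usable node from round $r+1$ onwards; a node whose stored timestamp is dominated by a received one switches into the same mode. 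Applying the level-by-level decoding argument underpinning Claim~2.17 of \cite{iris}, the distribution of fewer than $2^{\log_k n}$ non-usable servers is sparse enough for the sweep to reconstruct every piece at level $0$, so each maintaining server learns all $c$ pieces of its data item and thus $d$ itself.

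For the congestion, I would use that $\rep{1}$ chose the $c$ hash functions uniformly at random during the reencoding of $B_\curphase$ \emph{after} the adversary fixed its crashed set, so the assignment of the $\Theta(n\log m)$ pieces of $B_\curphase$ to the $n$ servers is a fresh random allocation. A standard Chernoff bound then gives that every server holds at most $O(\log n)$ pieces of $B_\curphase$ w.h.p., and, since each data item has a single maintaining server chosen by the first hash function, every server maintains $O(1)$ data items w.h.p. Both the downward forwarding and the final delivery therefore push $O(\log n)$ messages per server per round.

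The main obstacle will be the correctness argument: one has to check that the timestamp mechanism really cannot be fooled by a server that crashed in an even earlier write period (so its timestamp is strictly smaller than the current one), and that the inductive ``crashed nodes self-heal once they see $k-1$ fresh inputs'' step is compatible with the parity structure of the butterfly code - i.e.\ that the parity equations relating a node's block to those of its upward neighbours remain solvable under the $<2^{\log_k n}$ bound on non-usable servers. Once this is confirmed, the rest is routine.
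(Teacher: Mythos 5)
Your proposal follows essentially the same route as the paper's appendix argument: a $\log_k n$-round bottom-up sweep plus one delivery round; timestamp filtering to separate usable from non-usable nodes; bounding the non-usable set (currently crashed plus crashed-at-last-write) by less than $2^{\log_k n}$ and invoking Claim~2.17 of \cite{iris} for reconstruction at level~$0$; and using the fact that the hash functions were drawn fresh after the adversary committed to its crash set, so a Chernoff bound yields $O(\log n)$ pieces per server w.h.p., hence $O(\log n)$ congestion. Your closing caveat about earlier-period crashes is resolved exactly as you suspect: any server intact during the most recent write to $B_{\curphase}$ was fully overwritten then (the reencode touches every server), so the only servers with stale blocks are precisely those crashed during that single most recent write, which is what the $<\tfrac12\cdot 2^{\log_k n}$ bound covers; your argument is therefore complete and matches the paper's.
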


\subsection{Details on Counting and Selection}\label{sec:counting}

In the following we describe the process of determining the number of data items in $\datafrom{B_\curphase} \cup D_\curphase$ and the elements of the set $D_{\curphase+1}$ (if necessary) for a phase $\curphase$ in a distributed fashion in more detail.
For the set $D_\curphase$ of data items to be inserted into $B_\curphase$, we denote by $\requestServer{\curphase}{i} \subseteq D_\curphase$ the set of data items with a write request at server $\rep{i}$. 
Furthermore, we denote by $\bucketServer{\curphase}{i} \subseteq \datafrom{B_\curphase}$ the set of data items from bucket $B_\curphase$ that server $\rep{i}$ maintains.

First of all, each server $\rep{i}$, $i \in \{1,\dots,n\}$ initializes a tuple $(\num{0}, \num{1})$ where $\num{j}$, $j\in\{0,1\}$, is the number of data items $d \in \bucketServer{\curphase}{i} \cup \requestServer{\curphase}{i}$ with $\bit{d}{\curphase+1}=j$.
These tuples are now forwarded bottom up in the underlying $k$-ary butterfly, where each intermediate node sums up all tuples it received and forwards the result to the next smaller level.
More precisely, each server $\rep{i}$ first sends its tuple $(\num{0},\num{1})$ to each of the $k$ neighbors of the node $(\log_k{n},i)$ in the underlying $k$-ary butterfly.
Any intermediate node $v$ on level $\ell$, $0 < \ell < \log_k{n}$, 
sets $\num{j}$, $j\in\{0,1\}$, as the sum of all $k$ received $\num{j}$-values
and sends the tuple $(\num{0},\num{1})$ to its $k$ neighbors on level $\ell-1$ in the underlying $k$-ary butterfly.
Finally, a server $\rep{i}$ on level $0$, sums up all tuples received from neighbors on level $1$ and stores the result in $(\num{0}',\num{1}')$.
The following lemma is easy to check.
\begin{lemma}\label{lemma-counting}
  After $\log_k{n}$ rounds, each server $\rep{i}$ knows the number of data items $d \in \datafrom{B_\curphase} \cup D_\curphase$ with $\bit{d}{\curphase+1}=j$ for all $j \in \{0,1\}$.
  Additionally, in every round, each server $s$ sends and receives at most $2k$ messages.
\end{lemma}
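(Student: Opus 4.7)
The plan is to establish the two claims of the lemma by a straightforward inductive analysis of the bottom-up aggregation on the $k$-ary butterfly. I would argue correctness of the aggregated tuples by induction on the butterfly level $\ell$, processed from $\ell = \log_k n$ down to $\ell = 0$. The base case $\ell = \log_k n$ holds by the initialization, where each server $\rep{i}$, emulating the leaf node $(\log_k n, i)$, locally tallies $|\{d \in \bucketServer{\curphase}{i} \cup \requestServer{\curphase}{i} : \bit{d}{\curphase+1} = j\}|$ for $j \in \{0,1\}$. For the inductive step, I would use the butterfly edge set to observe that the $k$ level-$(\ell+1)$ neighbors of a node $(\ell, x)$ are precisely the nodes $(\ell+1, y)$ with $y_j = x_j$ for $j \neq \ell+1$. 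Summing the $k$ tuples received from these neighbors thus produces the aggregated count over exactly those leaves $(\log_k n, z)$ whose coordinates $z_1, \ldots, z_\ell$ agree with $x$. Iterating down to $\ell = 0$, every node $(0, x)$ ends up with the tuple whose entries are the global counts over all $n$ leaves, i.e., over $\datafrom{B_\curphase} \cup D_\curphase$, and since server $\rep{i}$ emulates $(0,i)$, each server obtains the desired values.

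For the round count I would note that exactly one butterfly level is processed per round, so traversing from level $\log_k n$ down to level $0$ takes exactly $\log_k n$ rounds, as claimed. For the communication bound I would observe that in any round $r \in \{1, \ldots, \log_k n\}$, only nodes at level $\log_k n - r + 1$ send (each to its $k$ neighbors one level below) and only nodes at level $\log_k n - r$ receive (each from its $k$ neighbors one level above). Since each server emulates exactly one node per level, it plays at most one sending role and at most one receiving role per round, giving at most $k$ messages sent plus at most $k$ received, i.e., $2k$ in total.

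The only step requiring care is the bookkeeping in the inductive argument: namely, verifying that the $k$ subsets of leaves aggregated by the $k$ level-$(\ell+1)$ neighbors of $(\ell, x)$ form a disjoint partition of the leaves that share the first $\ell$ coordinates with $x$. This is a direct consequence of the butterfly edge definition, since the neighbors at level $\ell+1$ differ exactly in coordinate $\ell+1$, and iterating downward varies coordinates $\ell+2, \ldots, \log_k n$ independently. Beyond this purely combinatorial observation, the argument reduces to elementary summation, so no further complication is anticipated.
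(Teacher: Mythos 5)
Your proof is correct and fills in the details of what the paper leaves as ``easy to check.'' The induction hypothesis you implicitly use — that after $r$ rounds of aggregation, the node $(\log_k n - r, x)$ holds the tallies over exactly the leaves of $LT((\log_k n - r, x))$, i.e.\ those $(\log_k n, z)$ with $z_j = x_j$ for $j \leq \log_k n - r$ — is exactly the right one, the disjointness observation you single out is precisely what makes the summation exact, and the $\log_k n$ round count and the $2k$-messages-per-node-per-round bound follow directly from processing one level per round as the paper's algorithm prescribes. (The paper states only the algorithm and asserts the lemma without proof, so there is no alternative argument in the paper to compare against; yours is the natural one that was intended.)
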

The servers can now compute $\textsf{size}(B_\curphase) := \num{0}'+\num{1}'$ and check whether $\textsf{size}(B_\curphase) > 2n$.
If this is not the case, the current bucket is reencoded together with the items from $D_r$ (see Section~\ref{sec:encoding}) and the Writing Stage finished. 
Otherwise the servers need to degree on the bucket and a set $D_{\curphase+1}$ of $n$ data items from $\datafrom{B_\curphase}\cup D_\curphase$ to be handled in the next phase.
Whether this bucket is either $\zerochildarg{B_\curphase}$ or $\onechildarg{B_\curphase}$ depends on the number of data items with the same $(\curphase+1)$-st bit in $D_\curphase\cup\datafrom{B_\curphase}$.
I.e., if $\num{0}' > n$ the next bucket is $\bucket{\curphase+1}:=\zerochildarg{B_\curphase}$ and we set $j:= 0$, otherwise the next bucket is $\bucket{\curphase+1}:=\onechildarg{B_\curphase}$ and we set $j:=1$.
Then, since $size(B_\curphase) > 2n$, $\num{j}' > n$ must hold.

In the following, the servers determine the set $D_{\curphase+1}$ of $n$ data items that will be propagated to bucket $\bucket{\curphase+1}$, as required in step~\ref{en:write-3-c}.
This is done by a top-down approach in the tree $LT((0,0))$ of the $k$-ary butterfly.
In the following, we assume that each node $v$ in $LT((0,0))$ during the first part (the bottom-up counting) stored the tuples $(t_{1,0},t_{1,1}),(t_{2,0},t_{2,1}), \dots, (t_{k,0},t_{k,1})$ it received from its children $v_1, \dots, v_k$ in $LT((0,0))$ and is now still able to determine 
the value of $t_{i,j}$, $i\in\{1,\ldots,k\}$.
Furthermore, the nodes exchange two different types of messages during this step: $\msgfull$ and $\msgpartly{x}$, $x \in \mathbb{N}$.
At the beginning, $(0,0)$ issues $\msgpartly{size($B_\curphase$)}$ on itself.
Depending on the message a node $v$ receives, it performs the actions described in the following.
\begin{description}
 \item [\textrm{\msgpartly{x}}:] If $v$ is not on level $\log_k{n}$, let $v_1, \dots, v_k$ denote the children of $v$ in $LT((0,0))$. 
        Determine the greatest index $b$ such that $y \leq x$ with 
        $y := \sum_{i=1}^{b} t_{i,j}$.
        Send $\msgfull$ to $v_1, \dots, v_b$.
        If $x-y > 0$, send $\msgpartly{x-y}$ to $v_{b+1}$.
        If $v$ is on level $\log_k{n}$, the server emulating $v$ randomly chooses $x$ data items $d  \in \bucketServer{\curphase}{i} \cup \requestServer{\curphase}{i}$ with $\bit{d}{\curphase+1}=j$.
        These data items belong to $D_{\curphase+1}$ and will be handled in the next phase as if the server emulating $v$ has a new write request for them.

  \item [\textrm\msgfull:] If $v$ is not on level $\log_k{n}$, $v$ sends a $\msgfull$-message to each of its children in $LT((0,0))$.
        If $v$ is on level $\log_k{n}$, the server emulating $v$ removes all data items $d \in \bucketServer{\curphase}{i} \cup \requestServer{\curphase}{i}$ with $\bit{d}{\curphase+1}=j$.
        These data items belong to $D_{\curphase+1}$ and will be handled in the next phase as if the server emulating $v$ has a new write request for them.
\end{description}

The following lemma is easy to check:
\begin{lemma}\label{lemma-determine-bucket}
  After $\log_k{n}$ additional rounds it holds:
  \begin{enumerate}
    \item Each server $\rep{i}$ knows which of the data items in $\bucketServer{\curphase}{i} \cup \requestServer{\curphase}{i}$ are supposed to be encoded in bucket $B_\curphase$ again and which of them are propagated to the next phase.
    \item In every round, each server $s$ sends and receives at most $2k$ messages.
    \item The number of data items that are decided to belong to bucket $B_\curphase$ (and thus will be encoded in this bucket) is at most $2n$.
  \end{enumerate}
\end{lemma}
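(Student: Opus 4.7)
The approach is to verify the three claims separately, using the fact that the top-down distribution is a single wave through the tree $LT((0,0))$ of depth $\log_k{n}$, together with the counts $(t_{i,0},t_{i,1})$ stored during the counting phase (Lemma~\ref{lemma-counting}) and the bucket invariant from Section~\ref{sec:underlying-ds}.

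For Claim~(1), I would induct on the level $\ell$, showing that after $\ell$ rounds each node $v$ at level $\ell$ in $LT((0,0))$ has received at most one message—either \msgfull or \msgpartly{x}—and that the messages present at level $\ell$ represent a consistent split of the initial request into demands for the level-$\ell$ subtrees. The base case is just the root $(0,0)$ receiving the initial \msgpartly{\cdot} message. For the induction step, a node that receives \msgfull simply forwards \msgfull to each of its $k$ children, while a node that receives \msgpartly{x} uses its stored $t_{i,j}$ values to determine the unique cutoff index $b$ with $\sum_{i\leq b}t_{i,j}\leq x<\sum_{i\leq b+1}t_{i,j}$, sending \msgfull to its first $b$ children and, if $x-y>0$, one \msgpartly{x-y} message to the $(b+1)$-th child. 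Since every node receives at most one message and the tree has depth $\log_k{n}$, after $\log_k{n}$ rounds every leaf server at level $\log_k{n}$ has enough local information to decide which of the items it holds (from $\bucketServer{\curphase}{i}\cup\requestServer{\curphase}{i}$ with bit $j$) belong to $D_{\curphase+1}$ and which stay with $B_\curphase$.

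For Claim~(2), note that in any given round only nodes at the currently active level produce traffic for this procedure. Each such node receives exactly one message from its parent and sends at most one message (either \msgfull or \msgpartly{\cdot}) to each of its $k$ children, for a total of at most $k+1$ messages. Since in the $k$-ary butterfly each server $s_i$ emulates exactly one node per level, at most one of the nodes it emulates can be sending in any given round and at most one can be receiving, so the per-server per-round message count is bounded by $k+1\leq 2k$, which gives Claim~(2).

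For Claim~(3), the top-down wave moves exactly those items with $\bit{d}{\curphase+1}=j$ prescribed by the partial-demand book-keeping into $D_{\curphase+1}$; everything else remains in $B_\curphase$. Step~\ref{en:write-3} is triggered only when $|\datafrom{B_\curphase}\cup D_\curphase|>2n$, and the bucket invariant gives $|\datafrom{B_\curphase}|\leq 2n$ while the preceding phase's construction (or the initialization $|D_0|\leq(1-\delta)n$) gives $|D_\curphase|\leq n$. Hence $|\datafrom{B_\curphase}\cup D_\curphase|\leq 3n$, and after moving $n$ items into $D_{\curphase+1}$ the residual contents of $B_\curphase$ have size at most $2n$. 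The main obstacle is a bookkeeping one, namely correctly arguing that the top-down wave selects exactly $n$ items of bit $j$—neither more nor less—so that $D_{\curphase+1}$ has the required size, which in turn relies on the $\num{j}'$-values computed during counting being accurate and on the demand issued at the root being terminated at exactly the right frontier of the tree; once this is pinned down, every other step is a level-by-level mechanical verification.
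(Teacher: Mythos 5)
The paper gives no proof of this lemma---it simply asserts it is ``easy to check''---so there is no argument to compare your write-up against; you are supplying the verification, and each of the three parts is essentially sound. The step you correctly flag as delicate---that the top-down wave harvests \emph{exactly} $n$ bit-$j$ items---is also the one place where the paper's protocol text is itself shaky. The description has $(0,0)$ issue $\msgpartly{\textsf{size}(B_\curphase)}$ with $\textsf{size}(B_\curphase)=\num{0}'+\num{1}'>2n$; but the total supply of bit-$j$ items along the wave is only $\num{j}'\leq\textsf{size}(B_\curphase)$, so with that demand every bit-$j$ leaf would end up receiving $\msgfull$, yielding $|D_{\curphase+1}|=\num{j}'$ rather than $n$. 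Claim~(3) would still survive in that reading ($B_\curphase$ would retain $\num{1-j}'=\textsf{size}(B_\curphase)-\num{j}'<3n-n=2n$ items), but the stipulation $|D_{\curphase+1}|=n$ from step~3(a)---which your own Claim~(3) accounting uses, via $|D_\curphase|\leq n$, to bound later phases---would break. The consistent interpretation, and the one your induction for Claim~(1) implicitly assumes, is that the root issues $\msgpartly{n}$; then $\num{j}'>n$ guarantees that the cutoff index $b$ is always well-defined at every internal node, the residual demands along any frontier sum to $n$, and exactly $n$ items are selected into $D_{\curphase+1}$. Under that reading your proofs of all three claims are correct; in particular the $k+1\leq 2k$ bound for Claim~(2) is fine, since in any round the single active node a server emulates in $LT((0,0))$ receives at most one message from its parent and sends at most $k$ messages to its children.
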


It remains to distribute for each data item from $D_{\curphase+1}$ a write request among the $n$ servers such that each server $\rep{i}$ is responsible for exactly one of these write requests.
This distribution can easily be achieved by using standard techniques for load balancing in the butterfly in $O(\log{n})$ rounds and a congestion of $O(\log{n})$ at each server.

%%%%%%%%%%%%%%%%%%
\subsection{Figures \& Glossary}

Figure~\ref{fig:probing-stage} and Figure~\ref{fig:decoding-stage} visualize the Probing and Decoding Stage.

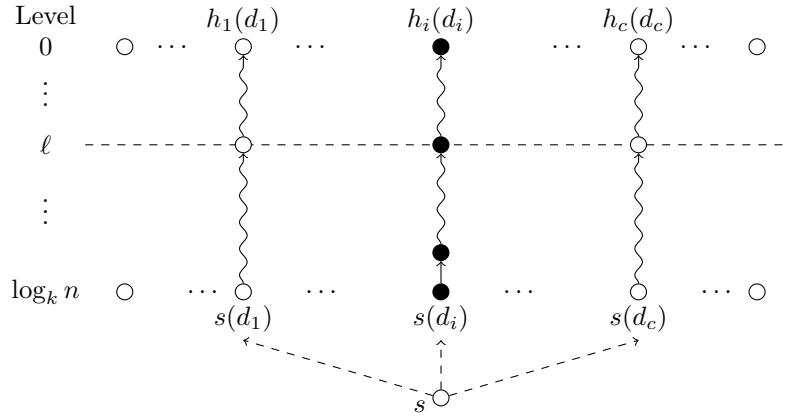
\begin{figure}[htp]
	\centering
	%\documentclass{article}
%\usepackage{ifthen}
%\usepackage{mathptmx}
%\usepackage{tikz}
%\usepackage{color}
%\usepackage[active,pdftex,tightpage]{preview}
%\usepackage{rotating}
%\PreviewEnvironment[]{tikzpicture}
%\PreviewEnvironment[]{pgfpicture}
%\DeclareSymbolFont{symbolsb}{OMS}{cmsy}{m}{n}
%\SetSymbolFont{symbolsb}{bold}{OMS}{cmsy}{b}{n}
%\DeclareSymbolFontAlphabet{\mathcal}{symbolsb}
%
%% distance between two points in a grid
%\newcommand{\gridsize}{11pt}
%
%% cricle size
%\newcommand{\nodesize}{8pt}
%
%\usetikzlibrary{arrows}
%\usetikzlibrary{fit}
%\usetikzlibrary{shadows}
%\usetikzlibrary{patterns}
%\usetikzlibrary{matrix,shapes,calc,snakes}
%\usetikzlibrary{positioning} 
%\usetikzlibrary{backgrounds}
%\usetikzlibrary{calc}
%\usetikzlibrary{decorations.markings}
%
%% define tikz styles used in our figures
%\tikzstyle{every picture}+=[
%node/.style={circle, minimum size=\nodesize, draw=black, inner sep=0,fill=white},       % print a circle
%blacknode/.style={circle, minimum size=\nodesize, draw=black, inner sep=0,fill=black},
%empty/.style={draw=none,fill=none},
%curvedLine/.style={decorate,decoration={snake,amplitude=.5mm}}]
%
%\newcommand{\labelednode}[4]{
%        \node[node] (#2) at (#1){};
%        \node[empty,yshift=1em] at (#1) {#3};
%}
%\newcommand{\xshift}{2}
%\newcommand{\sidelabelednode}[4]{
%        \node[node] (#2) at (#1){};
%        \node[empty,xshift=\xshift em] at (#1) {#3};
%}
%
%\newcommand{\sidedownlabelednode}[4]{
%        \node[node] (#2) at (#1){};
%        \node[empty,yshift=-0.7em,xshift=\xshift em] at (#1) {#3};
%}
%
%
%
%\begin{document}

\begin{tikzpicture}[scale=1.3]
	\def\nodexdist{0.4}
	\def\nodeydist{0.5}
	\def\levell{-2*\nodeydist}
	\def\level0{\nodeydist-0.05em}
	\def\leveld{-5*\nodeydist}
	\def\hidx{8*\nodexdist}
	\def\hcdx{13*\nodexdist}
	\def\hc1x{3*\nodexdist}
	\def\fillcolor{white}
	
	%%%%%%%%%%%%%%%%%%%%%%%%%%%%%%%%%%%%%
	%  Level 0 and 1
	%%%%%%%%%%%%%%%%%%%%%%%%%%%%%%%%%%%%%

	% Level 0 nodes
	\labelednode{0,\level0}{n}{};
	\node[empty] at (\hc1x-0.7,\level0){$\dots$};
	\labelednode{\hc1x,\level0}{h1}{$h_1(d_1)$};
	\node[empty] at (\hc1x+0.7,\level0){$\dots$};
	
	\node[node,fill=black] (hi) at (8*\nodexdist,\level0){};
	\node[empty,yshift=1em] at (hi) {$h_i(d_i)$};
	
	\node[empty] at (\hcdx-0.7,\level0){$\dots$};
	\labelednode{\hcdx,\level0}{hc}{$h_c(d_c)$};
	\node[empty] at (\hcdx+0.6,\level0){$\dots$};
	\labelednode{16*\nodexdist,\level0}{}{};

	% level l line
	\draw[dashed] (-0.4,\levell) -- (6.7,\levell);
	
	% level l nodes
	\node[node] (ul1) at (3*\nodexdist,\levell){};
	
	\node[node,fill=black] (uli) at (8*\nodexdist,\levell){};
	%\node[empty,yshift=0.7em,xshift=1em] at (uli) {$v$};
	
	\node[node] (ulc) at (13*\nodexdist,\levell){};
	
	% level l+1 nodes
	%\node[node,fill=black] (v2) at (8*\nodexdist,\levell-\nodeydist) {};
	%\node[node,fill=black] (v1) at (6*\nodexdist,\levell-\nodeydist) {};
	%\node[node,fill=black] (v3) at (10*\nodexdist,\levell-\nodeydist) {};
	
	% connections from level l+1 to level l
	%\draw[->] (v2) -> (uli);
	%\draw[->] (v1) -> (uli);
	%\draw[->] (v3) -> (uli);

	% Level d nodes
	\foreach \x in {0,3,8,13,16} {
		\ifthenelse{\x=8}{\def\fillcolor{black}}{}
		\node[node,fill=\fillcolor] (u\x) at (\x*\nodexdist,\leveld){};
		\ifthenelse{\x = 16}{}{
			\node[empty,xshift=3em] at (u\x) {$\dots$};
		}
	}
	
	% last level node labels
	\node[empty,yshift=-1em] at (u3) {$s(d_1)$};
	\node[empty,yshift=-1em] at (u8) {$s(d_i)$};
	\node[empty,yshift=-1em] at (u13) {$s(d_c)$};
	
	% server s
	\node[node,yshift=-4em] (server) at (u8) {};
	\node[empty,xshift=-0.8em,yshift=-0.3em] at (server) {$s$};
	
	% connections from s to last level nodes
	\draw[dashed,->] (server) -- ($(u3)+(0,-1.4em)$);
	\draw[dashed,->] (server) -- ($(u8)+(0,-1.4em)$);
	\draw[dashed,->] (server) -- ($(u13)+(0,-1.4em)$);
	
	% u_i path nodes
	\renewcommand{\xshift}{3.1}
	\node[node,fill=black] (ui1) at (8*\nodexdist,-4.2*\nodeydist){};
		
	% connections from level l nodes to level 0 nodes
	\draw[curvedLine,->] (u3) -> (ul1);
	\draw[curvedLine,->] (ul1) -> (h1);
	\draw[curvedLine,->] (u13) -> (ulc);
	\draw[curvedLine,->] (ulc) -> (hc);
	\draw[->] (u8) -- (ui1);
	\draw[curvedLine,->] (ui1) -> (uli);
	\draw[curvedLine,->] (uli) -> (hi);
	
	% Level names
	\node[empty,yshift=1.2em] at (-0.8,\level0) {Level};
	\node[empty] (level0label) at (-0.8, \level0) {0};
	\node[empty] (levelllabel) at (-0.8, \levell) {$\ell$};
	\node[empty] (leveldlabel) at (-0.8, \leveld) {$\log_kn$};	
	\node[empty] at ($(level0label)!0.4!(levelllabel)$) {$\vdots$};
	\node[empty] at ($(levelllabel)!0.4!(leveldlabel)$) {$\vdots$};	
	
\end{tikzpicture}

%\end{document}
	\caption{
		Visualization of the Probing Stage. The curved paths denote the paths $P_i(s(d_i),d_i)$.}
    \label{fig:probing-stage}
\end{figure}

\begin{figure}[htp]
\centering
\begin{tikzpicture}[scale=1.45]
	\def\nodexdist{0.3}
	\def\nodeydist{0.5}

	%  Level 0 nodes
	\foreach \x in {0,2,3,6,9,10,12}
		\node[node] (\x-0) at (\x*\nodexdist,0){};
		
	% Level label
	\node[empty] at (-0.6,0.4){Level};
		
	% h_i(d) node label
	\draw(6-0) node[label=above:$h_i(d_i)$]{};
		
	% Level 0 dots
	\foreach \x in {1,4,8,11}
		\node[empty] (\x-0) at (\x*\nodexdist,0) {$\dots$};
				
	% Last  level nodes
	\def\lastlevel{-4.5*\nodeydist}
	\foreach \x in {0,6,12}
		\node[node] (\x-last) at (\x*\nodexdist,\lastlevel){};
		
	% Last level dots
	\node[empty] at (0.8,\lastlevel) {$\dots$};
	\node[empty] at (2.7,\lastlevel) {$\dots$};
		
	% y-distance to l level
	\def\midlevel{\lastlevel/2-0.8*\nodeydist}

	% level r line
	\draw[dashed] (-0.2,\midlevel) -- (3.8,\midlevel);

	% intermediate u node
	\node[node] (u-mid) at (6*\nodexdist,\midlevel){};
	
	% intermediate u node label
	\node[empty,xshift=2.8em,yshift=-0.8em] at (u-mid){$u=s_i^{(\ell)}(d_i)$};
	
	% parents of itermediate node
	\def\dist{\midlevel+0.3}
	\foreach \x in {0,1,2} {
		\node[node] (u-mid-\x) at (5*\nodexdist+\x*0.2+0.1,\dist){};
		\draw (u-mid-\x) -- (u-mid);
	}
	
	% level r-i line
	\draw[dashed] (-0.2,\midlevel+2*\nodeydist) -- (3.8,\midlevel+2*\nodeydist);
	
	% node v at level r-i
	\node[node] (node-v) at (6*\nodexdist,\midlevel+2*\nodeydist){};
	\node[empty,xshift=0.7em,yshift=0.7em] at (node-v) {$v$};

	% line from last level node to u
	\draw[curvedLine,->] (6-last) -- (u-mid);		
	
	% tree
	\draw (u-mid-0.west) -- (3-0.west);
	\draw (u-mid-2.east) -- (9-0.east);	
	
	% Level labels
	\node[empty] at (-0.5,0) {$0$};
	\node[empty] at (-0.6,\midlevel+2*\nodeydist) {$\ell-\kappa$};
	\node[empty] at (-0.5,\midlevel) {$\ell$};
	\node[empty] at (-0.7,\lastlevel) {$\log_kn$};
	
	% u node label on last level
	\draw(6-last) node[label=below:$s(d_i)$]{};
	
	% T(u,d) label
	\node[empty] at (6*\nodexdist,0.4\dist) {$UT(u)$};
	
% 	% butterfly box
% 	\node[fit={
% 		($(0-0)+(-2.5em,1.6em)$) 
% 		($(12-last)+(2.5em,-1.6em)$)
% 		}, draw] {};	
		
\end{tikzpicture}
\caption{Visualization of sub-phase $\ell$ of the decoding stage} 
\label{fig:decoding-stage}
\end{figure}
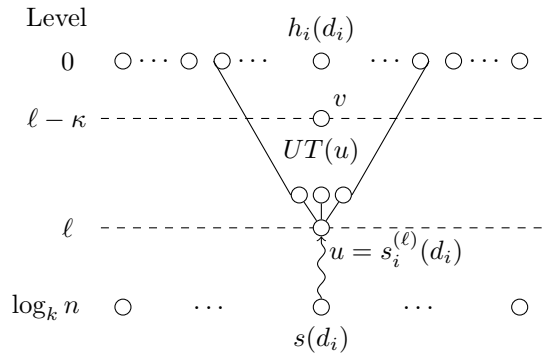
%%%%%%%%%%%%%%%%%

Table~\ref{tab:glossary} provides an overview of the variables and terms used in this work and their meanings.

\def\arraystretch{1.5}
\begin{table}
\centering
\begin{tabular}{rc|cl}
$\begin{array}{l}\mbox{Variables}\\\mbox{and Terms}\end{array}$ &&& Meaning and Notes\\\hline
$n$ &&& number of servers in the system\\
$m$ &&& size of universe of all possible keys, polynomial in $n$\\
$c$ &&& set to value $\geq 18\log m$ which is the number of pieces into which \\
&&& each data item is split before encoding it with RS codes\\
$k$ &&& set to $O(\log n)$, system uses $k$-ary butterfly as underlying topology\\
$\gamma$ &&& set to $1/72$ and used in the term $\gamma n^{1/\log\log n}$ that denotes the \\
&&& maximum number of crashed servers allowed \\
$p$ &&& positive constant in term $p\log n$ which is the length of an address\\
$\Lambda$ &&& set to $p\log n$ which is the length of an address\\
$\gamma n^{1/\log\log n}$ &&& upper bound for the number of crashed servers the system can tolerate\\
$\log_k n$ &&& depth of the underlying $k$-ary butterfly\\
$c/3$ &&& number of pieces of a data item needed to recover that data item\\
$\beta c k^2$ &&& maximum congestion at each intact server in each round\\
&&& of the decoding stage\\
$5c/6$ &&& number of pieces of a requested data item to proceed with \\
&&& in the decodings tage\\
\\
\end{tabular}
\caption{Variables and terms used in this work and their meanings}
\label{tab:glossary}

\end{table}
\def\arraystretch{1}

\end{document}